\documentclass[11pt,a4paper]{article}

\usepackage[a4paper,margin=1in]{geometry}
\usepackage{setspace}
\usepackage{amsmath,amssymb,amsthm,bm,mathtools}
\usepackage{mathrsfs}

\usepackage{graphicx}
\usepackage{booktabs}

\usepackage{algorithm}
\usepackage{algpseudocode}

\usepackage[colorlinks=true]{hyperref}

\usepackage[numbers]{natbib}
\usepackage{authblk}
\theoremstyle{plain}
\newtheorem{theorem}{Theorem}[section]
\newtheorem{proposition}{Proposition}[section]
\newtheorem{corollary}{Corollary}[section]
\theoremstyle{definition}
\newtheorem{definition}{Definition}[section]
\theoremstyle{remark}
\newtheorem{remark}{Remark}[section]

\newcommand{\E}{\mathbb{E}}
\newcommand{\Pp}{\mathbb{P}}
\newcommand{\R}{\mathbb{R}}
\newcommand{\1}{\mathbf{1}}

\newcommand{\norm}[1]{\left\lVert #1 \right\rVert}

\title{ Bayesian Linear Programming under Learned Uncertainty: Posterior Feasibility Guarantees, Scenario Certification, and Applications}

\author[1,2]{Debashis Chatterjee\thanks{Corresponding author: \texttt{debashis.chatterjee@visva-bharati.ac.in}}}

\affil[1]{Department of Statistics, Visva-Bharati University, Santiniketan, India}
\affil[2]{S.\ N.\ Bose National Centre for Basic Sciences, Kolkata, India}

\begin{document}
\maketitle

\begin{abstract}
Linear programming is widely used for decision-making in science, engineering, and operations research, yet in many modern applications the coefficients entering the constraints and objective are not known exactly and must be learned from data. Classical stochastic and robust optimization offer two influential paradigms for handling such uncertainty, but they typically treat the underlying uncertainty description as given and do not directly integrate priors and updated to posteriors guarantees. This paper develops a Bayesian framework for linear programming in which uncertain quantities are modeled probabilistically, updated through observed data, and propagated into optimization through posterior feasibility requirements. We present two complementary computational strategies: a credible-region robustification that converts posterior uncertainty into deterministic protection, and a posterior-scenario approach that uses sampled posterior realizations to construct tractable optimization problems with finite-sample interpretability. We also propose a Monte Carlo certification procedure that provides conservative, data-conditioned assessments of residual infeasibility. Simulation experiments show that the proposed framework substantially improves safety relative to naive plug-in decisions, while a real-data study on single-cell transcriptomic data demonstrates that the approach can produce scientifically interpretable decisions together with explicit uncertainty-aware feasibility diagnostics. The proposed methodology offers a unified bridge between Bayesian learning, optimization under uncertainty, and practical decision certification.
\end{abstract}

\noindent\textbf{Keywords:} Bayesian optimization under uncertainty; linear programming; posterior feasibility; chance-constrained decision-making; scenario-based optimization; uncertainty quantification; decision certification.

\section{Introduction}
Linear programming (LP) is a foundational tool in operations research and engineering, but its classical formulation presumes that the input data (objective and constraint coefficients) are known exactly. In many applications---capacity planning, revenue management, inventory control, energy dispatch, and portfolio allocation---the coefficients are estimated from historical data or produced by predictive models, hence uncertainty is intrinsic.

Two major traditions address uncertainty in optimization. \emph{Stochastic programming} and \emph{chance-constrained programming} treat coefficients as random and demand constraints hold with high probability \citep{prekopa1995,shapiroDentchevaRuszczynski2009}. However, these models often assume the probability law is known. \emph{Robust optimization} instead enforces feasibility for all parameter realizations in a prespecified uncertainty set, providing worst-case protection but sometimes at the cost of conservatism \citep{bentalElGhaouiNemirovski2009,bertsimasSim2004,bertsimasBrownCaramanis2011}.

In this paper we focus on the regime where uncertainty is learned from data and is naturally represented by a \emph{posterior distribution}. Bayesian decision theory provides a coherent framework for conditioning on observed data and choosing actions that optimize posterior expected utility \citep{berger1985}. Yet, for LP feasibility, the translation from posterior uncertainty to \emph{operationally meaningful} guarantees is not standardized.

\paragraph{Research objective.}
The aim of this paper is to develop a statistically principled framework for linear programming when key optimization inputs are not fixed, but are instead learned from data and remain uncertain after estimation. Our goal is to move beyond plug-in optimization by constructing decisions that are not only optimal with respect to learned information, but also accompanied by explicit, data-conditioned feasibility guarantees.

\paragraph{Novelty and contribution.}
The main contribution of this work is to unify Bayesian learning and linear optimization through the notion of posterior feasibility, thereby embedding uncertainty quantification directly into the optimization stage rather than treating it as a separate post hoc correction. In particular, the paper contributes in four directions. First, it formulates a Bayesian notion of feasibility that is conditioned on observed data and therefore naturally aligned with statistical learning. Second, it develops two tractable methodologies for enforcing such feasibility, namely a credible-region robustification strategy and a posterior-scenario strategy. Third, it supplements optimization with a practical certification step that yields conservative empirical assessments of residual infeasibility under the learned posterior. Finally, it demonstrates, through both simulation and real genetic data, that the framework can deliver decisions that are interpretable, uncertainty-aware, and safer than naive plug-in alternatives. Taken together, these contributions position Bayesian linear programming as a coherent meeting point of modern statistical inference and optimization under uncertainty.

\section{Related Work}

Optimization under uncertainty has been studied extensively across several research traditions, including stochastic programming, chance-constrained optimization, robust optimization, and scenario-based optimization. The framework proposed in this paper draws conceptual inspiration from these areas while introducing a distinct Bayesian perspective in which uncertainty is learned from data and propagated into decision guarantees through posterior distributions.

\paragraph{Chance-constrained optimization.}
Chance-constrained programming is one of the earliest approaches to handling uncertainty in optimization problems. In this paradigm, uncertain constraints must hold with high probability under a specified probability distribution. Foundational treatments appear in the work of \citet{prekopa1995} and in modern stochastic programming texts such as \citet{shapiroDentchevaRuszczynski2009}. While these approaches provide elegant probabilistic formulations, they typically assume that the underlying distribution governing uncertainty is known or externally specified. In contrast, our work focuses on situations where the distribution must be learned from data and is therefore naturally represented by a posterior distribution.

\paragraph{Robust optimization.}
Another influential framework is robust optimization, in which feasibility must hold for all parameter realizations within a specified uncertainty set. Seminal contributions include the robust optimization framework of \citet{bentalElGhaouiNemirovski2009} and tractable robust counterparts developed by \citet{bertsimasSim2004}. Subsequent work further explored data-driven and adaptive uncertainty sets \citep{bertsimasBrownCaramanis2011}. Robust optimization provides strong worst-case guarantees but can be overly conservative when uncertainty sets are large or loosely specified. The credible-region robustification approach developed in this paper can be interpreted as a data-driven robust optimization model in which the uncertainty set is derived from a Bayesian posterior credible region.

\paragraph{Scenario-based optimization.}
Scenario approaches provide an alternative strategy in which uncertain constraints are replaced by a finite set of sampled scenarios. The resulting optimization problem enforces feasibility on the sampled realizations, while theoretical results bound the probability of violation. Key theoretical developments were established by \citet{calafioreCampi2006} and \citet{campiGaratti2008}, who derived explicit finite-sample guarantees for convex optimization problems with sampled constraints. Our posterior-scenario formulation directly builds on this theory, with the key distinction that the sampling distribution is the Bayesian posterior conditioned on observed data.

\paragraph{Convex approximations of chance constraints.}
Several authors have studied convex conservative approximations for probabilistic constraints, particularly in the context of convex optimization. Important contributions include the work of \citet{nemirovskiShapiro2006}, who developed tractable approximations of chance constraints under distributional assumptions. These approaches provide computationally convenient surrogates for probabilistic feasibility but generally operate within a frequentist or distributionally specified framework rather than a Bayesian learning paradigm.

\paragraph{Bayesian decision theory and optimization.}
The conceptual foundation for incorporating posterior uncertainty into decision-making lies in Bayesian decision theory \citep{berger1985}. In this framework, optimal decisions are chosen based on posterior distributions conditioned on observed data. While Bayesian decision theory has been widely applied in statistics and machine learning, its integration with classical linear programming formulations has received comparatively little attention. The present work contributes to this direction by introducing posterior feasibility constraints and by combining Bayesian inference with scenario-based feasibility certification.

\paragraph{Contribution relative to existing literature.}
Existing approaches typically treat uncertainty either as known probabilistic structure (stochastic programming), worst-case perturbations (robust optimization), or sampled scenarios from a fixed distribution (scenario optimization). In contrast, the framework developed in this paper integrates \emph{statistical learning} and \emph{optimization under uncertainty} in a unified pipeline. Uncertainty is first learned from data through Bayesian inference, then incorporated into the optimization problem through posterior-feasibility constraints, and finally certified through Monte Carlo diagnostics. This integration provides a principled mechanism for translating statistical uncertainty into operational decision guarantees.
\section{Problem setup}
Let $x\in\R^n$ be the decision vector. Consider an LP with uncertain coefficients indexed by a parameter $\theta\in\Theta$:
\begin{equation}
\label{eq:uncertainLP}
\max_{x\in\mathcal{X}} \; c(\theta)^\top x
\quad\text{s.t.}\quad
A(\theta)x \le b(\theta),
\end{equation}
where $\mathcal{X}\subseteq \R^n$ denotes additional deterministic constraints (e.g., bounds, integrality relaxations, balance constraints). Write the $i$th uncertain inequality as
\begin{equation}
\label{eq:gi}
g_i(x,\theta):= a_i(\theta)^\top x - b_i(\theta) \le 0,\qquad i=1,\dots,m,
\end{equation}
with $a_i(\theta)^\top$ the $i$th row of $A(\theta)$.

\subsection{Bayesian learning of coefficients}
Suppose we observe data $D$ informative about $\theta$ (e.g., historical samples of demands, capacities, yields, regression features and outcomes). A Bayesian model specifies a prior $p(\theta)$ and likelihood $p(D\mid\theta)$, yielding posterior
\[
p(\theta\mid D) \propto p(D\mid\theta)p(\theta).
\]
We assume that posterior sampling is feasible (exactly or approximately), i.e., we can generate i.i.d.\ $\theta^{(1)},\dots,\theta^{(N)}\sim p(\theta\mid D)$.

\subsection{Decision criterion}
A Bayesian decision-maker could maximize posterior expected objective $\E[c(\theta)^\top x\mid D]$ subject to a posterior feasibility constraint. This paper concentrates on feasibility guarantees (which typically dominate safety/operational requirements), while allowing either risk-neutral or risk-averse objectives.

\subsection{Posterior feasibility}
\begin{definition}[Posterior feasibility and violation]
\label{def:postfeas}
Given data $D$, define the \emph{posterior violation probability} of a decision $x$ as
\begin{equation}
\label{eq:violation}
V_D(x) := \Pp_{\theta\mid D}\Big( \ \exists \  i\in\{1,\dots,m\}: g_i(x,\theta) > 0\Big).
\end{equation}
We say that $x$ is \emph{$(1-\alpha)$ posterior-feasible} if $V_D(x)\le \alpha$.
\end{definition}

Thus, posterior feasibility is a chance constraint with probability law equal to the posterior distribution conditioned on the observed data. It is conceptually aligned with Bayesian decision theory \citep{berger1985} and connects to chance-constrained programming \citep{prekopa1995,shapiroDentchevaRuszczynski2009}.

\begin{remark}[Interpretation]
Posterior feasibility is a \emph{conditional} statement: it bounds constraint-violation probability under $p(\theta\mid D)$. This differs from a frequentist guarantee that must hold uniformly over repeated samples of $D$. Both notions can be valuable; the present work focuses on the Bayesian (conditional) one, with explicit accounting of Monte Carlo approximation error.
\end{remark}

\section{Methodology}
\label{sec:methodology}

\subsection{Bayesian chance-constrained LP}
A canonical Bayesian LP with posterior feasibility is:
\begin{equation}
\label{eq:bayesCC}
\max_{x\in\mathcal{X}} \; \E[c(\theta)^\top x \mid D]
\quad\text{s.t.}\quad
V_D(x)\le \alpha.
\end{equation}
Problem \eqref{eq:bayesCC} is generally hard because $V_D(x)$ involves a probability of a union of inequalities.

We now develop two tractable strategies that provide \emph{posterior feasibility guarantees}.

\subsection{Credible-set robustification}
Let $\mathcal{C}_{1-\alpha}(D)\subseteq \Theta$ be a measurable \emph{credible region} such that
\begin{equation}
\label{eq:credmass}
\Pp(\theta\in \mathcal{C}_{1-\alpha}(D)\mid D)\ge 1-\alpha.
\end{equation}
Consider the robustified constraints
\begin{equation}
\label{eq:robustCred}
g_i(x,\theta)\le 0 \quad \forall \theta\in \mathcal{C}_{1-\alpha}(D),\; i=1,\dots,m.
\end{equation}

\begin{proposition}[Credible-set robustification implies posterior feasibility]
\label{prop:credImplies}
If $x$ satisfies \eqref{eq:robustCred} and $\mathcal{C}_{1-\alpha}(D)$ satisfies \eqref{eq:credmass}, then $x$ is $(1-\alpha)$ posterior-feasible, i.e.\ $V_D(x)\le \alpha$.
\end{proposition}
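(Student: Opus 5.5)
The plan is to prove Proposition~\ref{prop:credImplies} by set inclusion followed by monotonicity of the posterior measure. Fix $x$ satisfying \eqref{eq:robustCred} and define the \emph{violation event}
\[
E_x := \{\theta\in\Theta : \exists\, i\in\{1,\dots,m\},\ g_i(x,\theta)>0\},
\]
so that by Definition~\ref{def:postfeas} we have $V_D(x)=\Pp(\theta\in E_x\mid D)$. The first step is to show $E_x\subseteq \Theta\setminus \mathcal{C}_{1-\alpha}(D)$. This is the contrapositive of \eqref{eq:robustCred}: if $\theta\in\mathcal{C}_{1-\alpha}(D)$, then $g_i(x,\theta)\le 0$ for every $i$, so $\theta\notin E_x$.

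The second step applies monotonicity and complementation of the posterior probability:
\[
V_D(x)=\Pp(\theta\in E_x\mid D)\le \Pp\big(\theta\notin \mathcal{C}_{1-\alpha}(D)\mid D\big)=1-\Pp\big(\theta\in\mathcal{C}_{1-\alpha}(D)\mid D\big)\le \alpha,
\]
where the last inequality is exactly the credible-mass condition \eqref{eq:credmass}. This yields $V_D(x)\le\alpha$, which is the claimed $(1-\alpha)$ posterior feasibility.

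The only delicate point is measurability of $E_x$, which is needed for $V_D(x)$ to be defined at all. I would handle it by noting that $E_x=\bigcup_{i=1}^m\{\theta: a_i(\theta)^\top x-b_i(\theta)>0\}$, a finite union of sets that are measurable whenever $a_i(\cdot)$ and $b_i(\cdot)$ are measurable maps, which is implicitly assumed in Definition~\ref{def:postfeas}. If one wants to avoid that assumption, the bound still holds when $V_D(x)$ is read as an outer probability, because the inclusion $E_x\subseteq\Theta\setminus\mathcal{C}_{1-\alpha}(D)$ puts $E_x$ inside a measurable set of posterior mass at most $\alpha$. Apart from this, no earlier results are required; the argument relies only on \eqref{eq:credmass} and \eqref{eq:robustCred}.
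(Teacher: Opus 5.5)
Your proposal is correct and follows essentially the same route as the paper: you show that the violation event is contained in the complement of $\mathcal{C}_{1-\alpha}(D)$, then use monotonicity of the posterior measure together with \eqref{eq:credmass} to get $V_D(x)\le\alpha$. Your measurability and outer-probability remark is a valid refinement that the paper leaves implicit.
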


\begin{proof}
If $\theta\in \mathcal{C}_{1-\alpha}(D)$ then all constraints hold by \eqref{eq:robustCred}. Hence violation can occur only when $\theta\notin \mathcal{C}_{1-\alpha}(D)$:
\[
V_D(x)\le \Pp(\theta\notin \mathcal{C}_{1-\alpha}(D)\mid D)\le \alpha,
\]
where the last inequality follows from \eqref{eq:credmass}.
\end{proof}

\paragraph{Gaussian posterior leading to a second-order cone (SOC) form.}
A common case is an approximate Gaussian posterior for each constraint row. Let
\[
u_i(\theta):=\begin{bmatrix}a_i(\theta)\\ b_i(\theta)\end{bmatrix}\in\R^{n+1},
\qquad
z(x):=\begin{bmatrix}x\\ -1\end{bmatrix},
\]
so that
\[
g_i(x,\theta)=u_i(\theta)^\top z(x).
\]
Assume
\[
u_i(\theta)\mid D \approx \mathcal{N}(\bar u_i,\Sigma_i).
\]
To obtain an overall posterior-feasibility guarantee via row-wise credible regions, we use a Bonferroni correction and define
\[
\mathcal{C}_{1-\alpha/m,i}(D):=
\left\{
u:\ (u-\bar u_i)^\top \Sigma_i^{-1}(u-\bar u_i)\le \chi^2_{n+1}(1-\alpha/m)
\right\}.
\]
Requiring $u^\top z(x)\le 0$ for all $u\in \mathcal{C}_{1-\alpha/m,i}(D)$ yields the SOC constraint
\begin{equation}
\label{eq:soc}
\bar u_i^\top z(x) + \kappa_{\alpha/m}\,\norm{\Sigma_i^{1/2}z(x)}_2 \le 0,
\qquad
\kappa_{\alpha/m}:=\sqrt{\chi^2_{n+1}(1-\alpha/m)}.
\end{equation}
By the union bound, enforcing these row-wise robust constraints for all $i=1,\dots,m$ implies an overall posterior-feasibility guarantee at level $1-\alpha$. Thus credible-set robustification can transform Bayesian uncertainty into a deterministic conic program, connecting directly to the robust optimization tradition \citep{bentalElGhaouiNemirovski2009,bertsimasPachamanovaSim2004}. More generally, convex conservative approximations of chance constraints are studied in \citet{nemirovskiShapiro2006}.

\paragraph{Pros/cons.}
Credible-set robustification provides a clean implication (Proposition~\ref{prop:credImplies}), but may be conservative because it enforces constraints for \emph{all} parameters in a region, not merely with high posterior probability.

\subsection{Posterior-scenario approximation with finite-sample bounds}
A second route is to approximate \eqref{eq:bayesCC} by sampled constraints:
\begin{equation}
\label{eq:scenarioLP}
\max_{x\in\mathcal{X}} \; \E[c(\theta)^\top x \mid D]
\quad\text{s.t.}\quad
g_i(x,\theta^{(k)})\le 0,\ \forall i=1,\dots,m,\ \forall k=1,\dots,N,
\end{equation}
where $\theta^{(1)},\dots,\theta^{(N)}\overset{\text{i.i.d.}}{\sim}p(\theta\mid D)$. This is an LP whenever the sampled constraints are linear in $x$.

The scenario approach studies such random constraint sampling and provides explicit bounds on the violation probability of the obtained solution \citep{calafioreCampi2006,campiGaratti2008}. We state a standard scenario-theoretic bound specialized to our posterior setting; this is a conditional-on-data instantiation of the scenario results of \citet{calafioreCampi2006} and \citet{campiGaratti2008}.

\begin{theorem}[Posterior-scenario violation bound (scenario approach)]
\label{thm:scenario}
Assume $\mathcal{X}$ is convex and compact and $g_i(\cdot,\theta)$ is convex for each $\theta$. Let $x_N$ be an optimal solution of \eqref{eq:scenarioLP} based on $N$ i.i.d.\ posterior samples, and let $d$ denote the \emph{support rank} (at most $n$ in many LP settings). Then for any $\varepsilon\in(0,1)$,
\begin{equation}
\label{eq:campiGarattiBound}
\Pp\Big(V_D(x_N) > \varepsilon\Big)\ \le\ \sum_{j=0}^{d-1} \binom{N}{j}\varepsilon^{j}(1-\varepsilon)^{N-j},
\end{equation}
where the probability is over the random posterior sampling used to form \eqref{eq:scenarioLP}.
\end{theorem}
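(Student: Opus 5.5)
The proof reduces the statement to the Campi--Garatti theorem, applied conditionally on $D$. Fix the data $D$ throughout. Then the posterior $p(\theta\mid D)$ is a fixed probability measure $\Pp_D$ on $\Theta$, and the samples $\theta^{(1)},\dots,\theta^{(N)}$ are i.i.d.\ from $\Pp_D$. Problem \eqref{eq:scenarioLP} is then a standard random convex program in the sense of \citet{calafioreCampi2006} and \citet{campiGaratti2008}.

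To match their single-constraint format, I merge the $m$ rows into
\[
G(x,\theta):=\max_{1\le i\le m} g_i(x,\theta).
\]
This function is convex in $x$ for each $\theta$, being a pointwise maximum of convex functions. The event $\{G(x,\theta)>0\}$ is exactly the union event in \eqref{eq:violation}. Hence $V_D(x)=\Pp_D(G(x,\theta)>0)$, which is precisely the violation probability used in scenario theory. The objective $\E[c(\theta)^\top x\mid D]=\bar c^\top x$ is linear, and therefore deterministic given $D$. Compactness of $\mathcal{X}$ guarantees that an optimum exists for every sample.

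Next I verify the standing assumptions of the Campi--Garatti theorem. The first is uniqueness of the optimizer, which I would enforce by a fixed tie-break rule such as a lexicographic or strictly convex tie-break. The second is the definition of support constraints: a sample is a support constraint if removing it strictly changes the solution. The key combinatorial fact is that a convex program in $\R^n$ has at most $n$ support constraints. This follows from a Helly-type argument, which I would quote rather than reprove. I would then take $d$ as any deterministic upper bound on the number of support constraints, the ``support rank'' of the statement, with $d\le n$.

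With these in place, the Campi--Garatti theorem states
\[
\Pp_D^N\big(V_D(x_N)>\varepsilon\big)\le\sum_{j=0}^{d-1}\binom{N}{j}\varepsilon^j(1-\varepsilon)^{N-j}.
\]
This is \eqref{eq:campiGarattiBound}, with the outer probability taken over the sampling only. Measurability of $V_D(x_N)$ follows from measurability of the optimizer map under the tie-break rule.

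The main obstacle is justifying the assumptions rather than the bound itself. The bound with $d$ equal to the support rank (Helly dimension) requires a non-degeneracy condition, or else the tie-break and perturbation argument of Campi--Garatti, which extends the result to the degenerate case as an inequality. The task is to confirm that the reduction to the single constraint $G$ preserves these conditions. It also requires that the bound cover the Helly dimension or support rank, which may be smaller than $n$. I would cite the extension in \citet{campiGaratti2008} for the degenerate case and the refined support-rank notion for $d<n$, rather than reproving the combinatorial core.
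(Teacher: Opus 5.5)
Your proposal follows the paper's route. You condition on $D$ so that $p(\theta\mid D)$ is a fixed sampling law, treat \eqref{eq:scenarioLP} as a standard random convex program, and invoke the Campi--Garatti bound. You import the tie-break/measurability and support-rank (Helly dimension) conditions as hypotheses, which is exactly what the paper does in Theorem~\ref{thm:scenario_precise}; merging the rows into $G=\max_i g_i$ is a harmless convenience.

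One claim needs correcting: compactness of $\mathcal{X}$ does not guarantee that an optimum exists for every sample. The sampled constraints can leave an empty feasible set, in which case $x_N$ is undefined. You therefore need to assume, as the paper does in its hypothesis (iii), that the scenario program is feasible for every sample. That assumption is part of the existence-and-uniqueness requirement in \citet{campiGaratti2008}. Once feasibility is assumed, existence of an optimum does follow, since each finite convex $g_i(\cdot,\theta)$ is continuous and the feasible set is therefore compact.
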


\begin{remark}
Theorem~\ref{thm:scenario} is a direct instantiation of scenario-approach feasibility results for uncertain convex programs \citep{calafioreCampi2006,campiGaratti2008}. In our context, the ``uncertainty distribution'' is the posterior $p(\theta\mid D)$, hence the resulting guarantee targets posterior violation probability.
\end{remark}

\begin{corollary}[Choosing $N$ for a target posterior-feasibility level]
\label{cor:Nchoice}
Fix target posterior violation $\varepsilon$ and confidence $\delta$. If $N$ satisfies
\[
\sum_{j=0}^{d-1} \binom{N}{j}\varepsilon^{j}(1-\varepsilon)^{N-j} \le \delta,
\]
then with probability at least $1-\delta$ (over posterior sampling) we have $V_D(x_N)\le \varepsilon$.
\end{corollary}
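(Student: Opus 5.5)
The corollary should follow from Theorem~\ref{thm:scenario} by passing to the complementary event. The plan is to fix the observed data $D$ and regard $\theta^{(1)},\dots,\theta^{(N)}$ as i.i.d.\ draws from $p(\theta\mid D)$. Let $\Pp^N$ be their product law. The solution $x_N$ is a measurable function of the sample, so $V_D(x_N)$ is a random variable under $\Pp^N$, and the event $\{V_D(x_N)>\varepsilon\}$ is measurable. I will take this measurability for granted, as the scenario literature does. If the optimum is not unique, I would fix a measurable tie-breaking rule so that $x_N$ is well defined; this is the standing convention behind Theorem~\ref{thm:scenario}.

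Next, apply Theorem~\ref{thm:scenario} with the given $\varepsilon\in(0,1)$ and chain it with the hypothesis on $N$:
\[
\Pp^N\big(V_D(x_N)>\varepsilon\big)\le \sum_{j=0}^{d-1}\binom{N}{j}\varepsilon^j(1-\varepsilon)^{N-j}\le\delta .
\]
Taking complements then gives $\Pp^N\big(V_D(x_N)\le\varepsilon\big)\ge 1-\delta$, which is exactly the claim. The convexity and compactness assumptions of the theorem carry over unchanged. Here $d$ is the same support rank that appears in the theorem.

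The only real point of care is making sure the hypotheses of Theorem~\ref{thm:scenario} hold. These are convexity and compactness of $\mathcal{X}$, convexity of each $g_i(\cdot,\theta)$ (automatic here, since $g_i$ is affine in $x$), and existence of an optimizer. If the sampled program could be infeasible, I would note that the Campi--Garatti bound is usually stated for feasible instances, or with infeasibility counted as a violation. Either convention is compatible with the conclusion as stated.

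I would also remark that such an $N$ always exists. The binomial tail $\sum_{j<d}\binom{N}{j}\varepsilon^j(1-\varepsilon)^{N-j}$ is the probability that a $\mathrm{Bin}(N,\varepsilon)$ variable is at most $d-1$, and this tends to $0$ as $N\to\infty$ for fixed $d$ and $\varepsilon$. The corollary is therefore never vacuous.
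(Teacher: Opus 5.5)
Your proposal is correct and matches the paper's proof: you chain the bound of Theorem~\ref{thm:scenario_precise} with the hypothesis on $N$ and then pass to the complementary event. Your additional points on measurability, tie-breaking and feasibility are sound; the paper handles them only through assumption (iii) of its precise restatement. The existence of a suitable $N$ is a correct extra observation that the paper does not make.
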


\paragraph{Pros/cons.}
Posterior-scenario LPs are often easy to solve (they remain LPs) and provide tunable violation guarantees. Their main cost is sampling and solving a larger LP; their main modeling benefit is reduced conservatism relative to robustification.

\subsection{Post-solution Monte Carlo certification}
Even after solving \eqref{eq:scenarioLP} or the robustified program, it is useful to estimate $V_D(x)$ directly.

Given a candidate solution $\hat x$, draw an independent posterior sample $\{\tilde\theta^{(\ell)}\}_{\ell=1}^{M}\sim p(\theta\mid D)$. Let
\[
\hat V_M(\hat x):=\frac{1}{M}\sum_{\ell=1}^M \1\Big\{\exists i:\ g_i(\hat x,\tilde\theta^{(\ell)})>0\Big\}.
\]

A conservative one-sided $(1-\beta)$ upper confidence bound on the posterior violation probability $V_D(\hat x)$ is obtained via binomial inversion (Clopper--Pearson). Equivalently, this yields a lower confidence bound on the posterior feasibility probability $1-V_D(\hat x)$.

\[
V_D(\hat x) \le \mathrm{BetaInv}\big(1-\beta;\ s+1,\ M-s\big),
\quad s:=\sum_{\ell=1}^M \1\Big\{\exists i:\ g_i(\hat x,\tilde\theta^{(\ell)})>0\Big\},
\]
where $\mathrm{BetaInv}$ is the quantile of a Beta distribution. This step is a practical diagnostic: it does not replace Theorem~\ref{thm:scenario} but complements it with an explicit empirical check.

\section{Algorithmic summary}
\begin{algorithm}[h!]
\caption{Bayesian LP with Posterior Feasibility Guarantees}
\label{alg:blp}
\begin{algorithmic}[1]
\Require Data $D$, posterior sampler for $\theta\mid D$, feasible set $\mathcal{X}$, risk level $\alpha$
\State Fit Bayesian model and obtain access to samples $\theta\sim p(\theta\mid D)$
\State Choose one of:
\Statex \quad (A) Credible-set robustification with credible region $\mathcal{C}_{1-\alpha}(D)$
\Statex \quad (B) Posterior-scenario approximation with sample size $N$ for target $(\varepsilon,\delta)$
\If{(A)}
  \State Construct $\mathcal{C}_{1-\alpha}(D)$ with $\Pp(\theta\in\mathcal{C}_{1-\alpha}(D)\mid D)\ge 1-\alpha$
  \State Solve $\max_{x\in\mathcal{X}} \E[c(\theta)^\top x\mid D]$ s.t.\ $g_i(x,\theta)\le 0$ for all $\theta\in\mathcal{C}_{1-\alpha}(D)$
  \State Output $\hat x$
\Else
  \State Draw $\theta^{(1)},\dots,\theta^{(N)}\sim p(\theta\mid D)$ i.i.d.
  \State Solve the scenario LP \eqref{eq:scenarioLP} to obtain $x_N$
  \State Output $\hat x:=x_N$ with violation guarantee from Theorem~\ref{thm:scenario}
\EndIf
\State Certification: draw $\tilde\theta^{(1)},\dots,\tilde\theta^{(M)}\sim p(\theta\mid D)$ and estimate/upper-bound $V_D(\hat x)$
\end{algorithmic}
\end{algorithm}


\section{Simulation study: posterior-feasibility in Bayesian LP}
\label{sec:sim}

\subsection{Design and evaluation protocol}
\label{subsec:sim_design}
We consider a profit-maximizing production LP with $n=18$ decision variables and $m=7$ resource constraints,
\begin{align}
\max_{x\in\mathbb{R}^n}\quad & p^\top x
\qquad\text{s.t.}\qquad
A x \le b,\quad 0\le x \le x_{\max}\mathbf{1},
\label{eq:sim_lp}
\end{align}
where $A\in\mathbb{R}^{m\times n}$ and $p\in\mathbb{R}^n$ have positive entries and are fixed throughout a Monte Carlo run, while the resource capacities $b\in\mathbb{R}^m$ are uncertain and learned from data.

\paragraph{Data-generating process.}
In each trial, we draw a ``true'' linear model for each resource $j\in\{1,\dots,m\}$:
\[
b_j = x_{\text{ctx}}^\top \beta_j^\star + \varepsilon_j,\qquad \varepsilon_j\sim\mathcal{N}(0,(\sigma_j^\star)^2),
\]
with a context vector $x_{\text{ctx}}\in\mathbb{R}^{d}$ (including an intercept; here $d=6$), and we generate $N_{\text{obs}}=90$ historical observations to learn $\beta_j^\star$ and $(\sigma_j^\star)^2$. We then solve \eqref{eq:sim_lp} using a capacity vector $\hat b$ produced by each method below.

\paragraph{Bayesian learning model.}
For each resource $j$, we fit a conjugate Normal--Inverse-Gamma Bayesian linear regression, yielding an explicit posterior predictive distribution for the next-period capacity $b_j \mid D$ (Student-$t$). Posterior samples are used for scenario-based optimization and for an independent Monte Carlo ``certificate'' of posterior violation.

\paragraph{Methods compared.}
We evaluate five approaches:
\begin{itemize}
\item \textbf{PM (Posterior Mean plug-in):} $\hat b_j=\mathbb{E}[b_j\mid D]$ (risk-ignorant baseline).
\item \textbf{CR (Credible-set robustification) \emph{(ours)}:} per-resource lower predictive quantile with union bound,
\[
\hat b_j = Q_{\,\alpha/m}(b_j\mid D),\qquad j=1,\dots,m,
\]
so that $P_{\theta\mid D}(\text{all constraints hold})\gtrsim 1-\alpha$ under the union bound logic.
\item \textbf{PS (Posterior-Scenario) \emph{(ours)}:} draw $N_{\text{scen}}=300$ i.i.d.\ posterior predictive samples of $b$ and enforce all sampled constraints, which is equivalent to $\hat b=\min_{k\le N_{\text{scen}}} b^{(k)}$ componentwise.
\item \textbf{FPQ (Frequentist predictive quantile):} OLS fit with Student-$t$ predictive quantiles, again using $\alpha/m$ per constraint.
\item \textbf{RB (Robust-box heuristic):} $\hat b_j=\mu_j - z_{1-\alpha/m}\,s_j$ using a Normal $z$-score heuristic.
\end{itemize}
In all cases, the LP \eqref{eq:sim_lp} is solved by a standard convex solver, and only solutions with status \texttt{optimal} (or \texttt{optimal\_inaccurate}) are retained.

\paragraph{Metrics.}
For a computed decision $\hat x$, we estimate:
\begin{enumerate}
\item \textbf{Profit} $p^\top \hat x$.
\item \textbf{True out-of-sample violation probability}
\[
\widehat V_{\text{true}}(\hat x)=\frac{1}{M}\sum_{\ell=1}^M \mathbf{1}\!\left\{A\hat x \nleq b^{(\ell)}_{\text{true}}\right\},
\]
using $M=5000$ fresh samples from the \emph{true} data-generating distribution.
\item \textbf{Posterior violation estimate and a conservative posterior certificate:}
$\widehat V_{\text{post}}(\hat x)$ estimated from $M_{\text{cert}}=5000$ posterior predictive draws, and the Clopper--Pearson one-sided 95\% upper bound $\widehat V_{\text{post,UB95}}(\hat x)$.
\end{enumerate}
We run $60$ trials for each target $\alpha\in\{0.01,0.05,0.10\}$ (total $180$ LP instances per method).

\subsection{Aggregate results and discussion}
\label{subsec:sim_results}

\paragraph{Main quantitative summaries.}
Table~\ref{tab:sim_by_alpha} reports the mean (and standard deviation) of profit and the mean (and standard deviation) of true violation probability by method and target risk $\alpha$.
Table~\ref{tab:sim_overall} aggregates across all $\alpha$.

\begin{table}[h!]
\centering
\caption{Simulation summary by target risk level $\alpha$ (60 trials per $\alpha$).
Reported are mean profit and its standard deviation, mean true violation probability and its standard deviation,
and posterior violation diagnostics (Monte Carlo mean) including a conservative 95\% upper bound.}
\label{tab:sim_by_alpha}
\small
\begin{tabular}{@{}ccccccccc@{}}
\toprule
$\alpha$ & Method & $n$ & Profit mean & Profit sd & $\widehat V_{\text{true}}$ mean & sd &
$\widehat V_{\text{post}}$ mean & $\widehat V_{\text{post,UB95}}$ mean \\
\midrule
0.0100 & CR  & 60 & 556.2547 &  94.8090 & 0.0073 & 0.0080 & 0.0048 & 0.0067 \\
0.0100 & FPQ & 60 & 543.7832 &  96.5516 & 0.0047 & 0.0057 & 0.0028 & 0.0043 \\
0.0100 & PM  & 60 & 759.5913 &  79.4713 & 0.9051 & 0.0859 & 0.9121 & 0.9184 \\
0.0100 & PS  & 60 & 558.3245 & 101.4542 & 0.0127 & 0.0125 & 0.0091 & 0.0115 \\
0.0100 & RB  & 60 & 561.7612 &  94.2242 & 0.0090 & 0.0097 & 0.0061 & 0.0082 \\
\midrule
0.0500 & CR  & 60 & 582.6296 &  80.3209 & 0.0302 & 0.0169 & 0.0239 & 0.0277 \\
0.0500 & FPQ & 60 & 572.3997 &  80.8666 & 0.0210 & 0.0123 & 0.0165 & 0.0198 \\
0.0500 & PM  & 60 & 751.3455 &  77.3139 & 0.9291 & 0.0522 & 0.9309 & 0.9365 \\
0.0500 & PS  & 60 & 551.4119 &  87.7301 & 0.0140 & 0.0108 & 0.0117 & 0.0144 \\
0.0500 & RB  & 60 & 585.8197 &  80.2025 & 0.0338 & 0.0184 & 0.0270 & 0.0311 \\
\midrule
0.1000 & CR  & 60 & 622.2889 &  79.5292 & 0.0557 & 0.0247 & 0.0468 & 0.0520 \\
0.1000 & FPQ & 60 & 613.4360 &  80.6589 & 0.0425 & 0.0206 & 0.0343 & 0.0388 \\
0.1000 & PM  & 60 & 766.1454 &  69.1882 & 0.9162 & 0.0508 & 0.9207 & 0.9267 \\
0.1000 & PS  & 60 & 567.8640 &  80.7549 & 0.0125 & 0.0108 & 0.0094 & 0.0119 \\
0.1000 & RB  & 60 & 624.6181 &  79.3269 & 0.0604 & 0.0262 & 0.0511 & 0.0565 \\
\bottomrule
\end{tabular}
\end{table}

\begin{table}[h!]
\centering
\caption{Overall summary across all $\alpha\in\{0.01,0.05,0.10\}$ (180 runs per method).}
\label{tab:sim_overall}
\small
\begin{tabular}{@{}ccccccc@{}}
\toprule
Method & $n$ & Profit mean & Profit sd & $\widehat V_{\text{true}}$ mean &
$\widehat V_{\text{post}}$ mean & $\widehat V_{\text{post,UB95}}$ mean \\
\midrule
CR  & 180 & 587.0577 &  88.9643 & 0.0311 & 0.0252 & 0.0288 \\
FPQ & 180 & 576.5396 &  90.5220 & 0.0227 & 0.0179 & 0.0210 \\
PM  & 180 & 759.0274 &  75.2771 & 0.9168 & 0.9212 & 0.9272 \\
PS  & 180 & 559.2001 &  90.1374 & 0.0131 & 0.0101 & 0.0126 \\
RB  & 180 & 590.7330 &  88.2892 & 0.0344 & 0.0281 & 0.0319 \\
\bottomrule
\end{tabular}
\end{table}

Figure~\ref{fig:sim_profit_bars} shows mean profit by method for each $\alpha$.
Figure~\ref{fig:sim_viol_bars} shows mean true violation probability by method for each $\alpha$, with the dashed line marking the target $\alpha$.
Figure~\ref{fig:sim_calibration} plots the risk-calibration curve (achieved violation versus target $\alpha$).
Figure~\ref{fig:sim_frontier} visualizes the empirical profit--risk trade-off across all trials.
Figure~\ref{fig:sim_certificate} reports the mean conservative posterior certificate $\widehat V_{\text{post,UB95}}$ versus $\alpha$.

\begin{figure}[h!]
\centering
\begin{minipage}{0.32\textwidth}
\centering
\includegraphics[width=\linewidth]{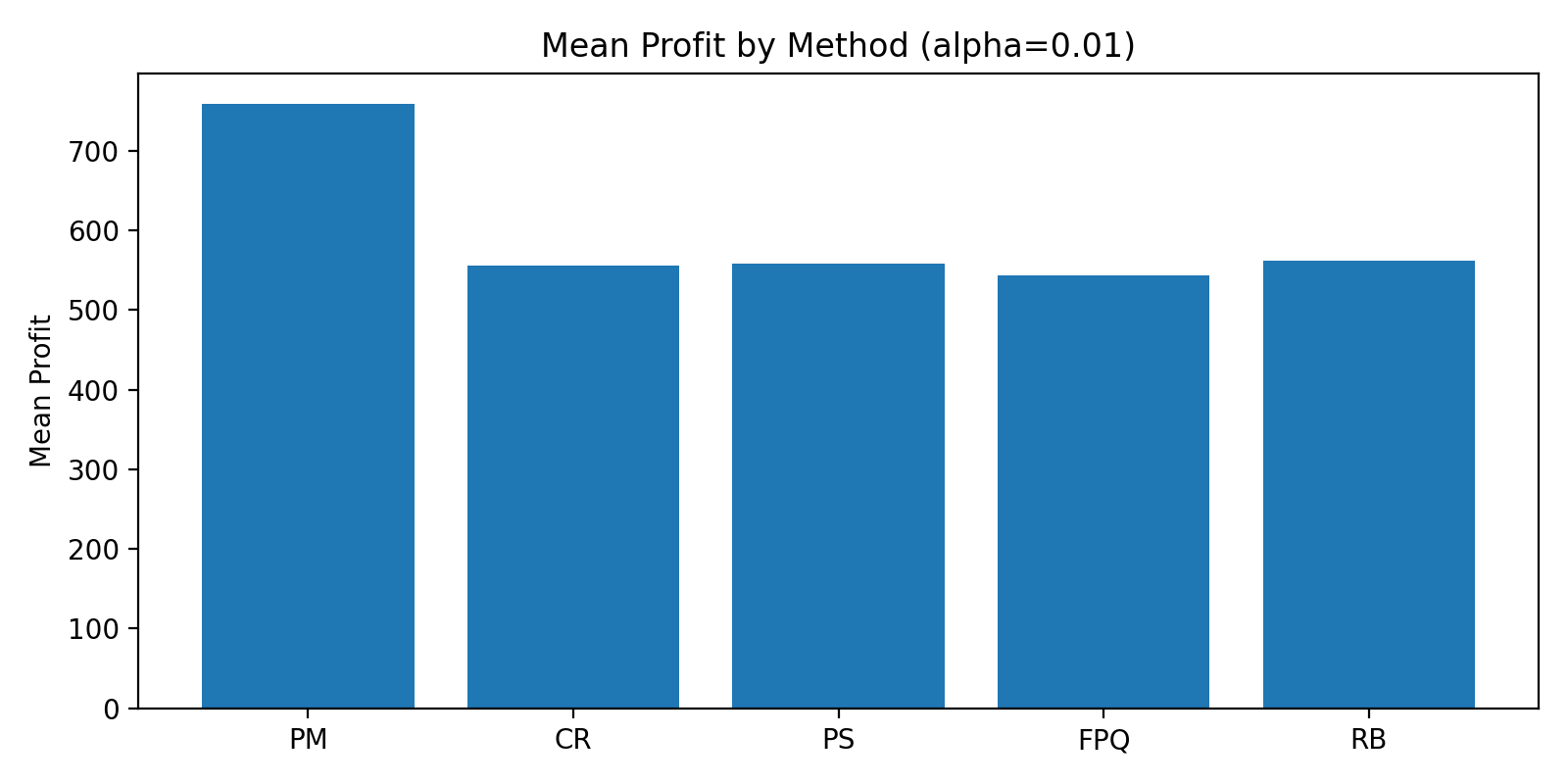}
\caption*{(a) $\alpha=0.01$}
\end{minipage}\hfill
\begin{minipage}{0.32\textwidth}
\centering
\includegraphics[width=\linewidth]{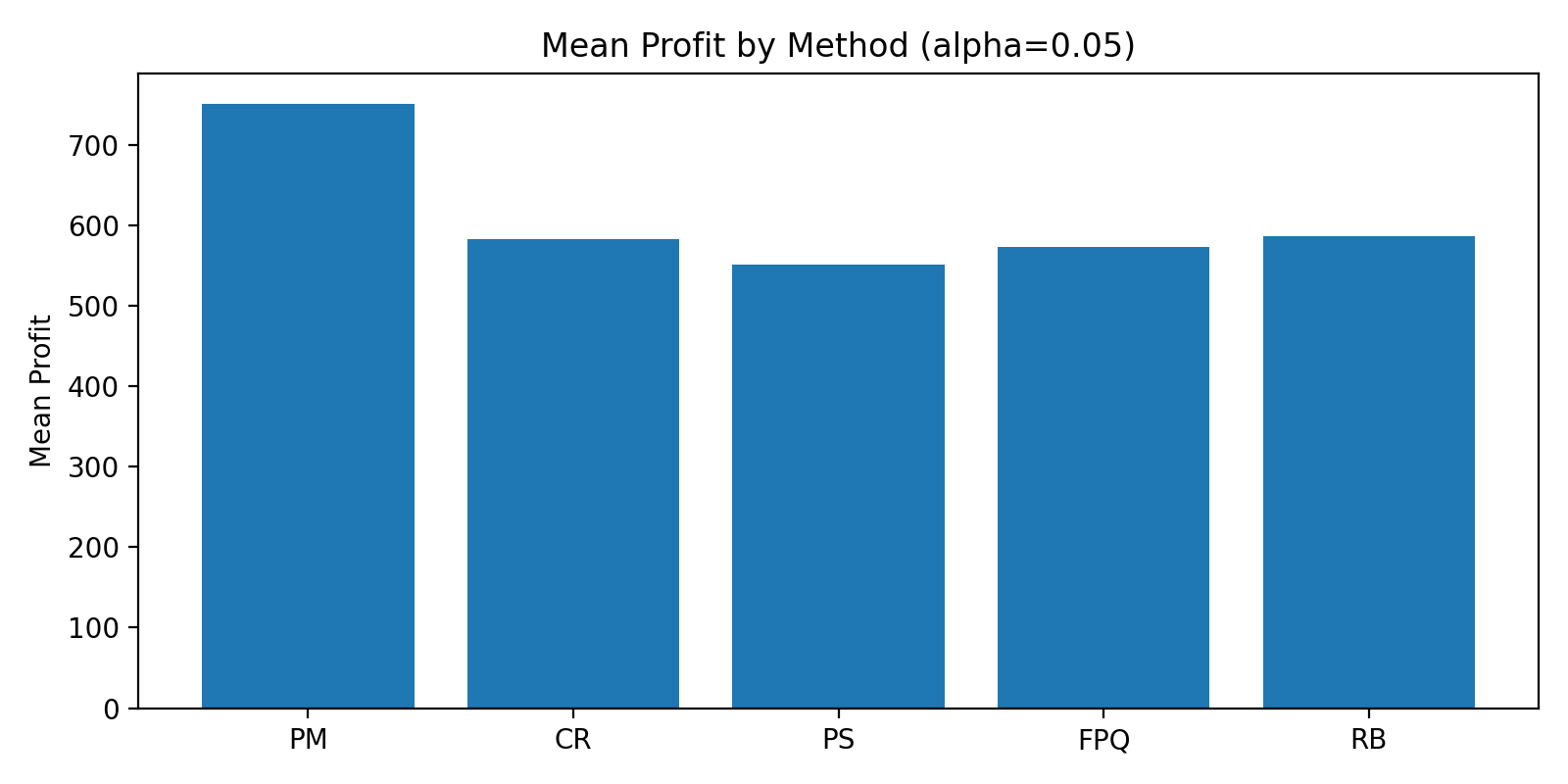}
\caption*{(b) $\alpha=0.05$}
\end{minipage}\hfill
\begin{minipage}{0.32\textwidth}
\centering
\includegraphics[width=\linewidth]{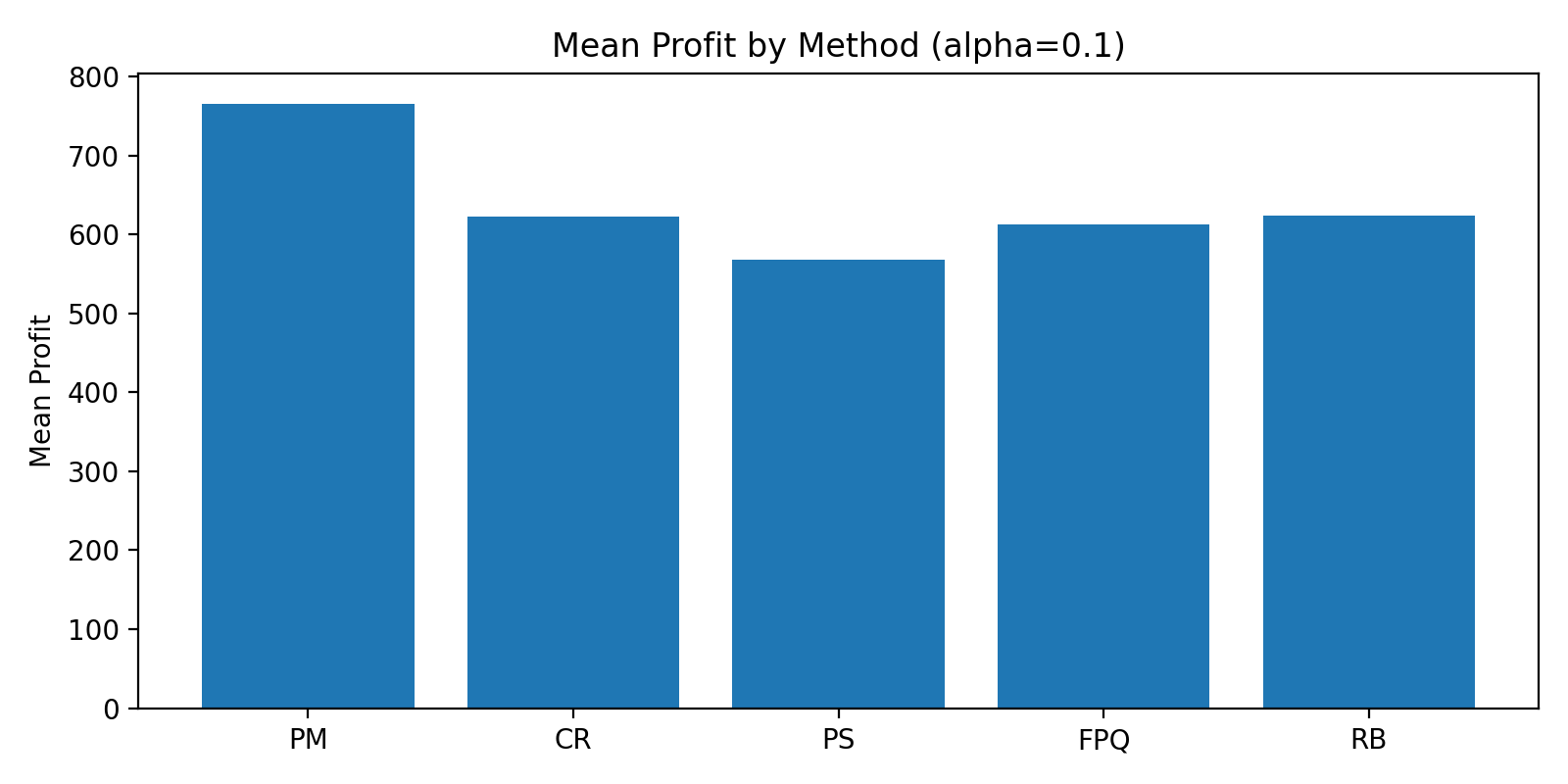}
\caption*{(c) $\alpha=0.10$}
\end{minipage}
\caption{Mean profit by method for each target risk level $\alpha$.}
\label{fig:sim_profit_bars}
\end{figure}

\begin{figure}[h!]
\centering
\begin{minipage}{0.32\textwidth}
\centering
\includegraphics[width=\linewidth]{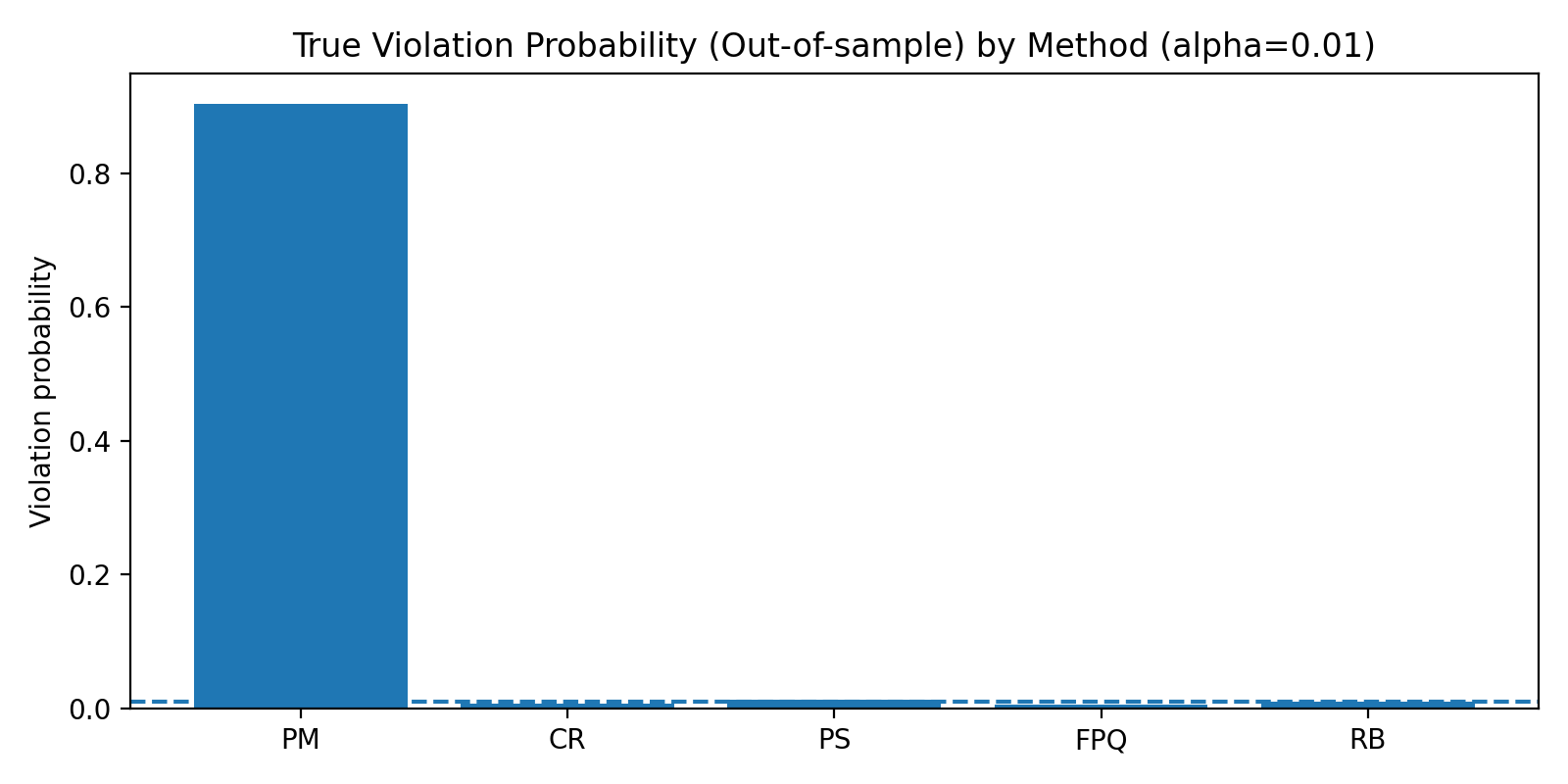}
\caption*{(a) $\alpha=0.01$}
\end{minipage}\hfill
\begin{minipage}{0.32\textwidth}
\centering
\includegraphics[width=\linewidth]{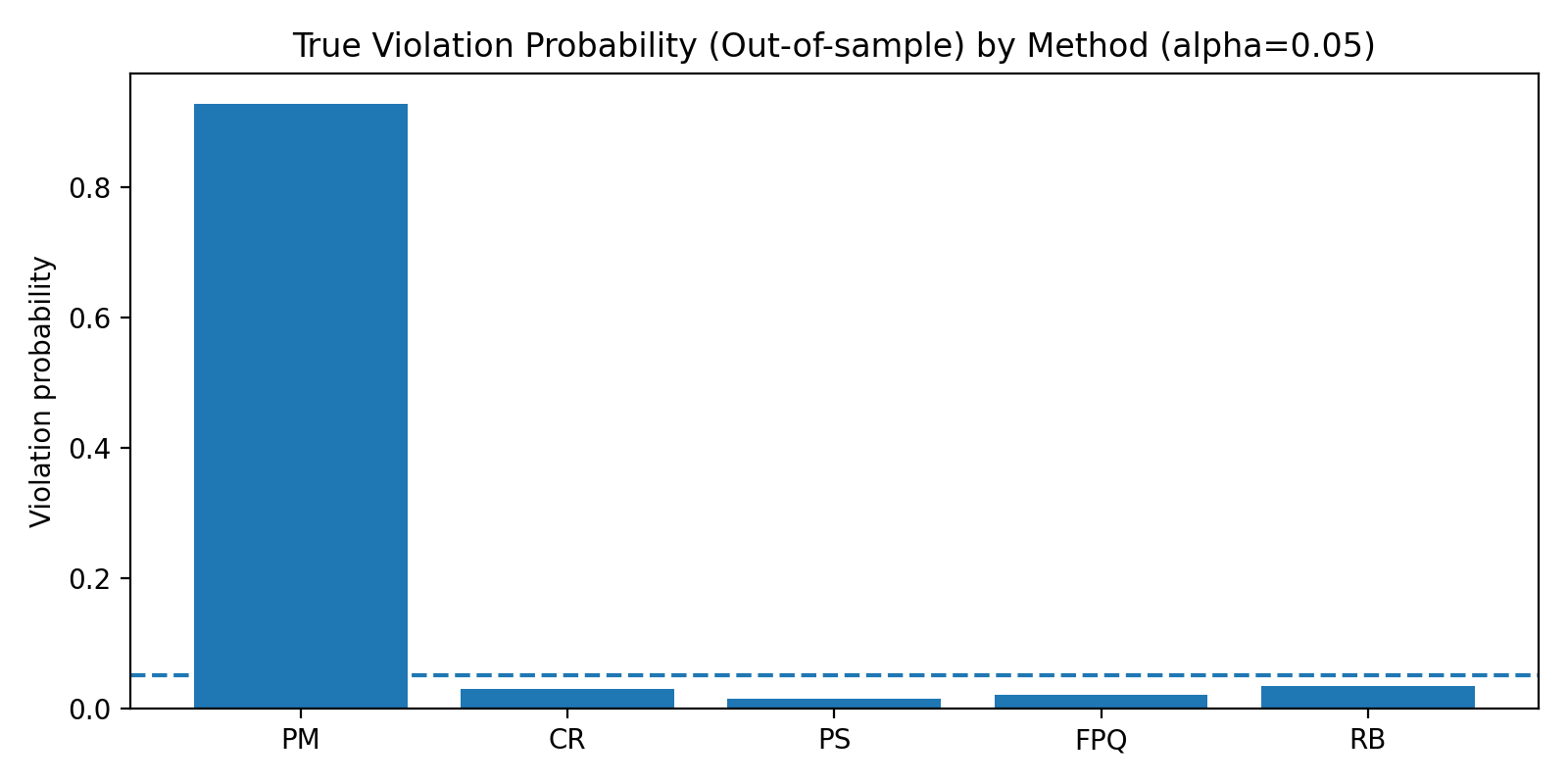}
\caption*{(b) $\alpha=0.05$}
\end{minipage}\hfill
\begin{minipage}{0.32\textwidth}
\centering
\includegraphics[width=\linewidth]{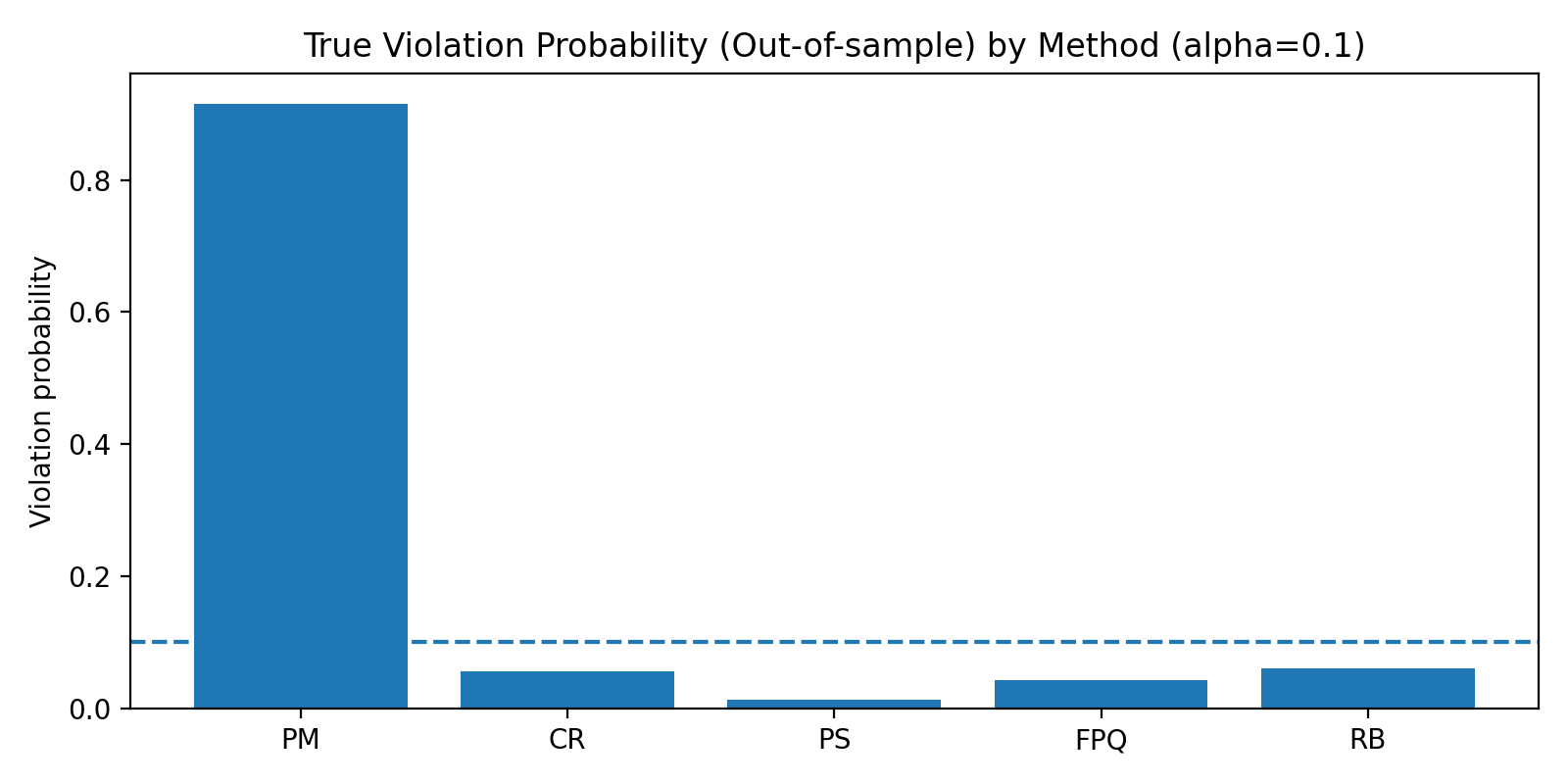}
\caption*{(c) $\alpha=0.10$}
\end{minipage}
\caption{Mean true out-of-sample violation probability $\widehat V_{\text{true}}$ by method. The dashed horizontal line marks the target $\alpha$.}
\label{fig:sim_viol_bars}
\end{figure}

\begin{figure}[h!]
\centering
\includegraphics[width=0.70\linewidth]{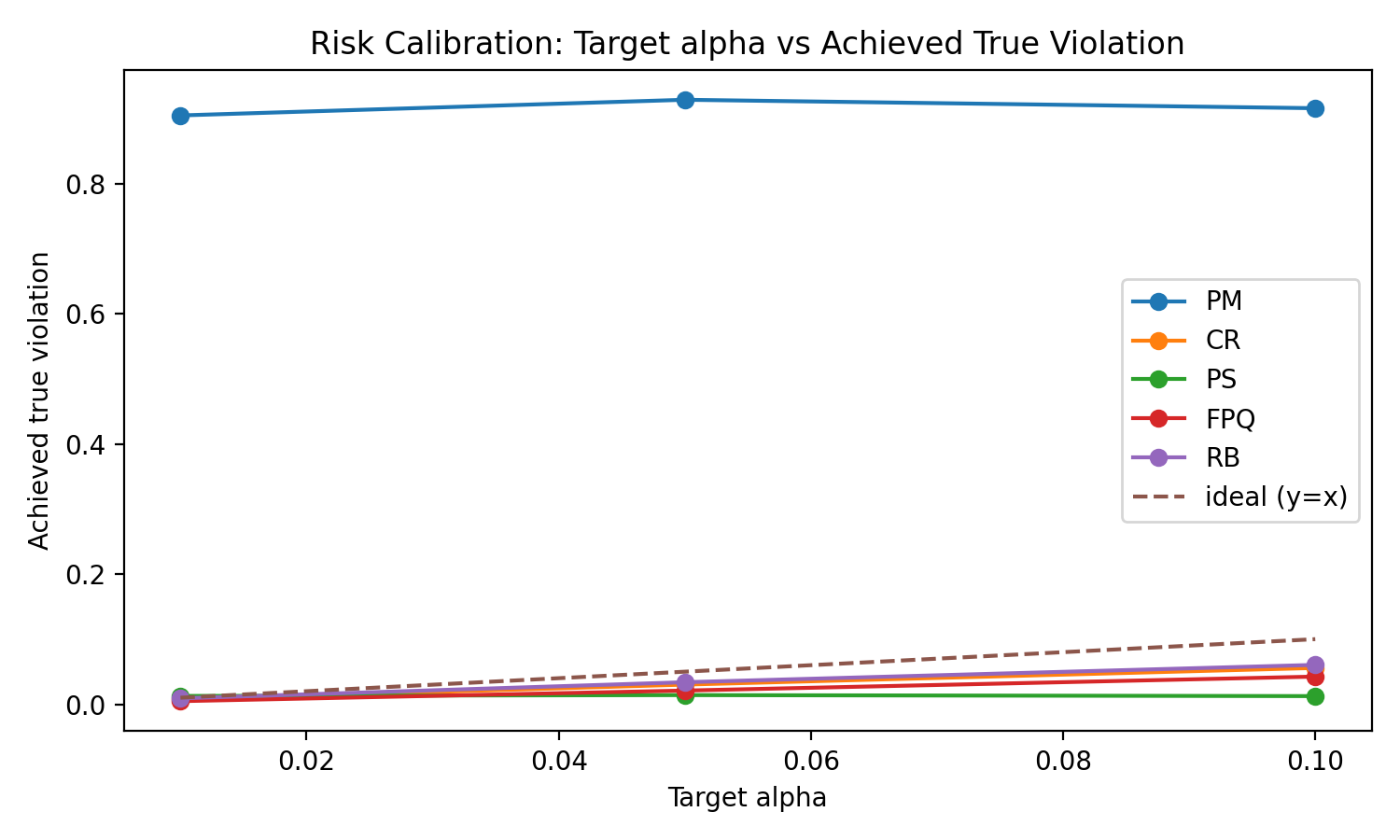}
\caption{Risk calibration: achieved true violation versus target $\alpha$ (ideal calibration corresponds to the diagonal).}
\label{fig:sim_calibration}
\end{figure}

\begin{figure}[h!]
\centering
\includegraphics[width=0.70\linewidth]{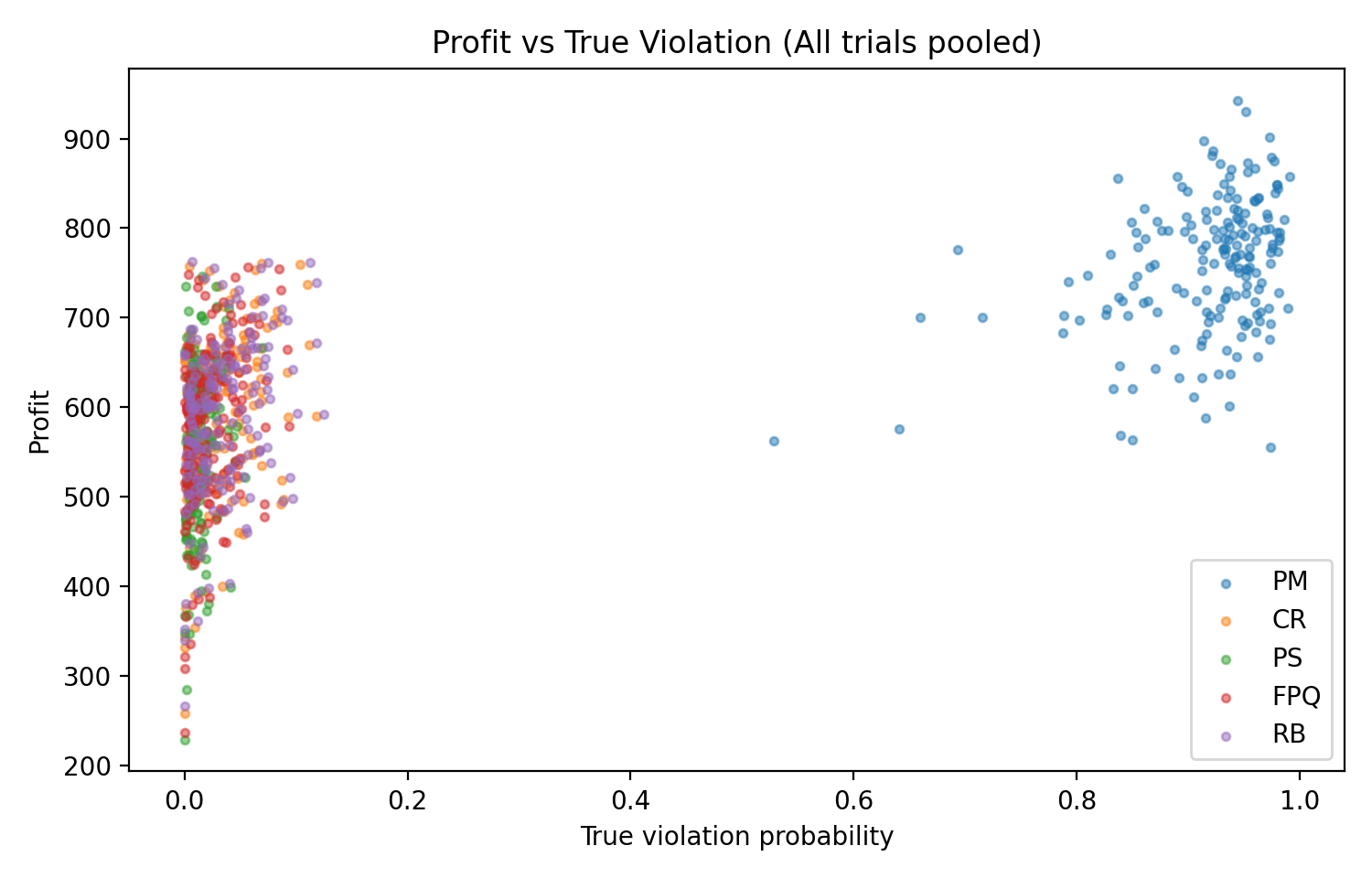}
\caption{Profit--risk trade-off across all trials pooled: scatter of profit versus true violation probability.}
\label{fig:sim_frontier}
\end{figure}

\begin{figure}[h!]
\centering
\includegraphics[width=0.70\linewidth]{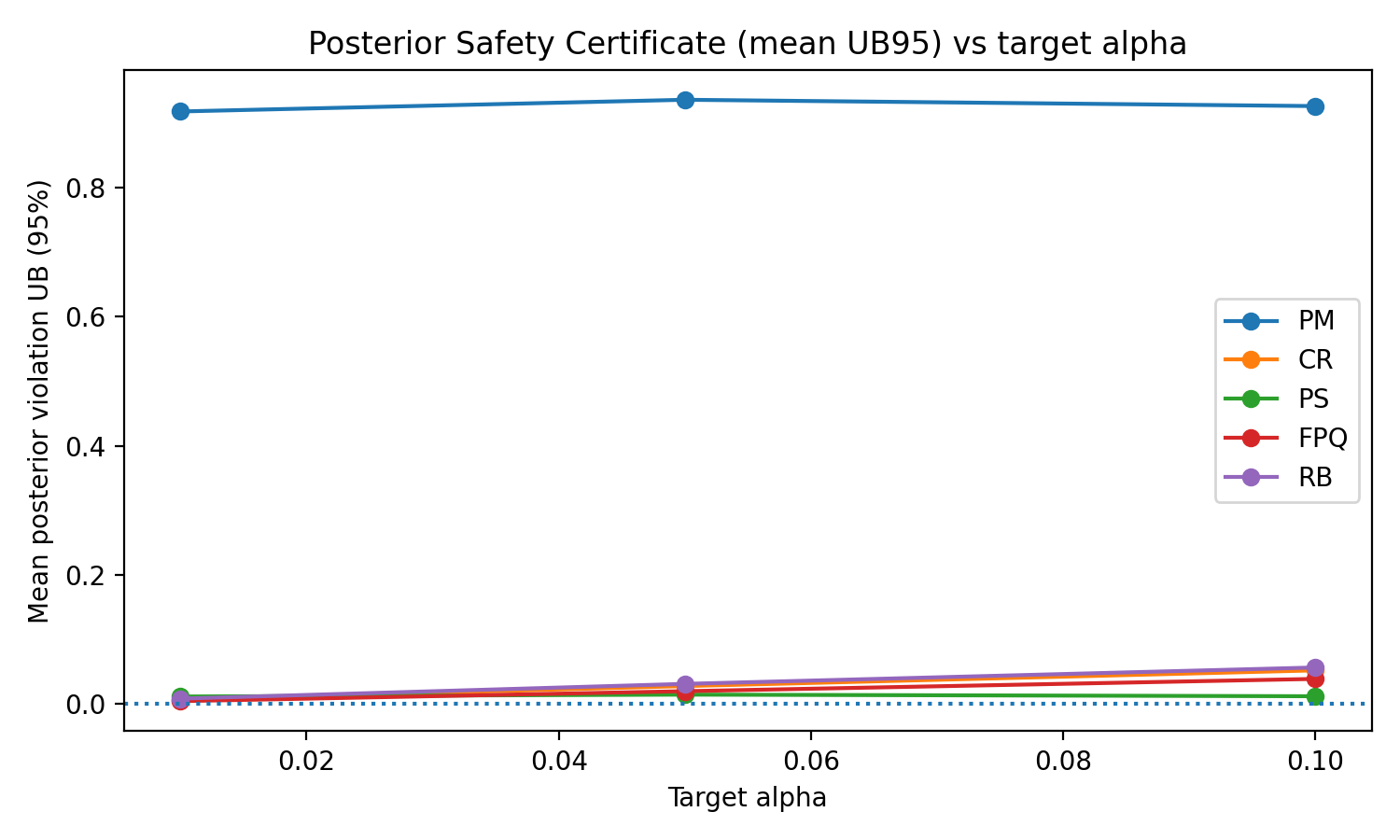}
\caption{Posterior safety certificate: mean conservative upper bound $\widehat V_{\text{post,UB95}}$ versus target $\alpha$.}
\label{fig:sim_certificate}
\end{figure}

\subsection{Key findings}
\label{rem:key_findings1}
The results in Tables~\ref{tab:sim_by_alpha}--\ref{tab:sim_overall} and Figures~\ref{fig:sim_profit_bars}--\ref{fig:sim_certificate} support the following conclusions.
\begin{enumerate}
\item \textbf{Posterior-mean plug-in is unsafe.}
PM attains the highest profit but exhibits catastrophic infeasibility: $\widehat V_{\text{true}}\approx 0.91$ across all $\alpha$, demonstrating that ignoring uncertainty can produce highly profitable yet operationally invalid decisions.

\item \textbf{The Proposed PS method is \emph{best for safety} in this simulation.}
Across all $\alpha$ levels pooled, PS has the lowest mean true violation ($0.0131$ in Table~\ref{tab:sim_overall}), and it is the lowest-violation method at $\alpha=0.05$ and $\alpha=0.10$ (Table~\ref{tab:sim_by_alpha}). Thus, under the present linear-Gaussian data-generating regime, PS delivers the strongest empirical posterior-feasibility behavior.

\item \textbf{The proposed CR method is a competitive, less conservative alternative.}
CR provides substantial safety improvements over PM and RB, while typically yielding higher profit than PS at $\alpha=0.05$ and $\alpha=0.10$ (Table~\ref{tab:sim_by_alpha}). Its achieved violation increases with $\alpha$ (Figure~\ref{fig:sim_calibration}), indicating a more ``responsive'' risk--reward trade-off than PS.

\item \textbf{A strong classical comparator exists (FPQ).}
FPQ attains the lowest mean violation at $\alpha=0.01$ (Table~\ref{tab:sim_by_alpha}) and remains competitive for other $\alpha$ values. This is unsurprising here because the data-generating process is well matched to the OLS predictive model. Importantly, The Proposed Bayesian methods (CR/PS) provide a coherent posterior-learning pipeline and explicit posterior-certification diagnostics (Figure~\ref{fig:sim_certificate}), which are central to the Proposed proposed ``posterior feasibility guarantee'' perspective.

\item \textbf{Conservatism of PS at larger $\alpha$.}
At $\alpha=0.10$, PS remains much safer than required ($\widehat V_{\text{true}}\approx 0.0125$ in Table~\ref{tab:sim_by_alpha}), indicating conservative behavior under the current choice $N_{\text{scen}}=300$ and the ``enforce-all-scenarios'' construction. This conservatism explains the lower profit of PS relative to CR/FPQ in Figure~\ref{fig:sim_profit_bars} and the profit--risk frontier in Figure~\ref{fig:sim_frontier}.
\end{enumerate}
Overall, in these experiments \textbf{our proposal is best when the primary goal is safety / posterior-feasibility} (PS), while \textbf{CR offers a competitive compromise} between safety and profit with a more calibrated response to the risk parameter $\alpha$.

\section{Real-data demonstration: safe gene-panel selection from single-cell RNA-seq}
\label{sec:realdata_pbmc}

\subsection{Dataset and scientific task}
\label{subsec:pbmc_data_task}
We demonstrate our \emph{posterior-feasibility} Bayesian LP methodology on a real genetic dataset: the PBMC3k single-cell RNA-seq dataset (processed and distributed through the \texttt{scanpy} Python library). The dataset contains $n_{\text{cells}}=2638$ peripheral-blood mononuclear cells and $p=1838$ genes, together with an 8-group cell-type annotation (Louvain-based clustering labels provided with the processed dataset).
Figure~\ref{fig:pbmc_umap} visualizes the dataset by UMAP embedding colored by cluster.

\paragraph{Goal (gene-panel selection).}
A common experimental design problem in single-cell genomics is to select a \emph{small gene panel} (e.g., for targeted assays) that remains informative across heterogeneous cell populations. We pose this as a \emph{budgeted linear program} that selects a panel of $B=30$ genes (from a candidate pool) to maximize a between-cluster discrimination score, while enforcing a \emph{posterior-feasibility constraint}: each cluster should have sufficient expected panel detectability with high posterior probability.

\subsection{Bayesian modeling of detectability and the LP formulation}
\label{subsec:pbmc_model_lp}
\paragraph{Candidate gene pool.}
From the full gene set we form a candidate pool of $K=300$ genes using a simple, transparent filter: genes with high between-cluster variability of mean expression (computed from the processed matrix). Each candidate gene $g$ is assigned a nonnegative weight $w_g$ equal to the variance of its cluster-wise mean expression. Intuitively, larger $w_g$ indicates stronger between-cluster separation.

\paragraph{Detectability as the uncertain coefficient.}
For each cluster $j\in\{1,\dots,J\}$ and candidate gene $g\in\{1,\dots,K\}$, we define a detection indicator
\[
Y_{ijg}=\mathbf{1}\{ \text{gene $g$ is detected in cell $i$ from cluster $j$} \},
\]
and let $q_{jg} = \Pr(Y_{ijg}=1)$ denote the cluster-specific detection probability.
Using a conjugate Beta--Binomial model,
\[
q_{jg} \sim \mathrm{Beta}(a_0,b_0),\qquad
\sum_{i\in \mathcal{I}_j} Y_{ijg}\mid q_{jg} \sim \mathrm{Binomial}(n_j,q_{jg}),
\]
we obtain the closed-form posterior
\[
q_{jg}\mid D \sim \mathrm{Beta}(a_0+s_{jg},\; b_0+n_j-s_{jg}),
\]
where $n_j$ is the number of cells in cluster $j$ and $s_{jg}$ is the observed number of detections for gene $g$ in cluster $j$.
We use the uniform prior $a_0=b_0=1$.

\paragraph{Decision variables and objective.}
Let $x_g\in[0,1]$ be a relaxed selection variable for candidate gene $g$; the panel-size budget is enforced by $\sum_{g=1}^K x_g \le B$.
The objective is linear:
\[
\max_{x}\ \sum_{g=1}^K w_g x_g.
\]

\paragraph{Posterior-feasibility constraint (the Proposed method: PS).}
We require each cluster to have at least $L=8$ \emph{expected detectable} panel genes under the posterior, in a scenario-based (posterior-sampled) sense. Specifically, draw $S=300$ i.i.d.\ posterior samples
$q_{jg}^{(s)}\sim \mathrm{Beta}(a_{jg},b_{jg})$ and enforce
\[
\sum_{g=1}^K q_{jg}^{(s)} x_g \ge L,\qquad \forall j=1,\dots,J,\ \forall s=1,\dots,S.
\]
This is a \emph{linear program} in $x$ once the posterior scenarios are sampled, and it is exactly the Proposed \emph{Posterior-Scenario (PS)} posterior-feasibility construction.

\paragraph{Hard panel extraction.}
After solving the relaxed LP, we form the final gene panel by taking the top $B=30$ genes by the optimized $x_g$ values.

\subsection{Empirical posterior-feasibility certification}
\label{subsec:pbmc_cert}
Beyond optimization-time scenario enforcement, we perform an independent posterior Monte Carlo certification: we draw $M_{\text{cert}}=4000$ fresh posterior samples of $(q_{jg})$ and estimate the posterior violation probability,
\[
\widehat V_{\text{post}}(\hat x) = \frac{1}{M_{\text{cert}}}\sum_{\ell=1}^{M_{\text{cert}}}
\mathbf{1}\Big\{\exists j:\ \sum_{g=1}^K q_{jg}^{(\ell)} \hat x_g < L\Big\}.
\]
We also report a conservative one-sided 95\% Clopper--Pearson upper bound $\widehat V_{\text{post,UB95}}(\hat x)$.

Table~\ref{tab:pbmc_certificate} summarizes the global certificate, and Table~\ref{tab:pbmc_cluster_detail} reports cluster-wise coverage distributions and violation rates. Figure~\ref{fig:pbmc_coverage_boxplot} visualizes the posterior coverage distributions across clusters, and Figure~\ref{fig:pbmc_violation_hist} summarizes feasible vs.\ violating posterior draws.

\subsection{Results}
\label{subsec:pbmc_results}


\begin{figure}[h!]
\centering
\includegraphics[width=0.78\linewidth]{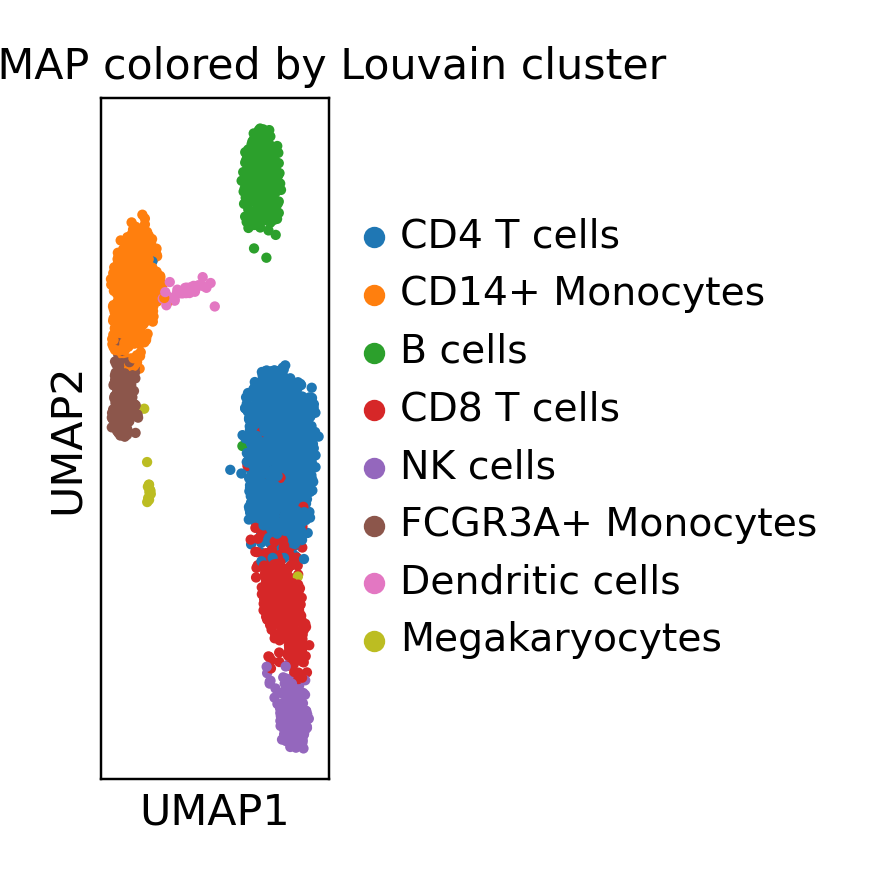}
\caption{PBMC3k UMAP embedding colored by the 8 provided cluster labels.}
\label{fig:pbmc_umap}
\end{figure}

\begin{figure}[h!]
\centering
\includegraphics[width=0.90\linewidth]{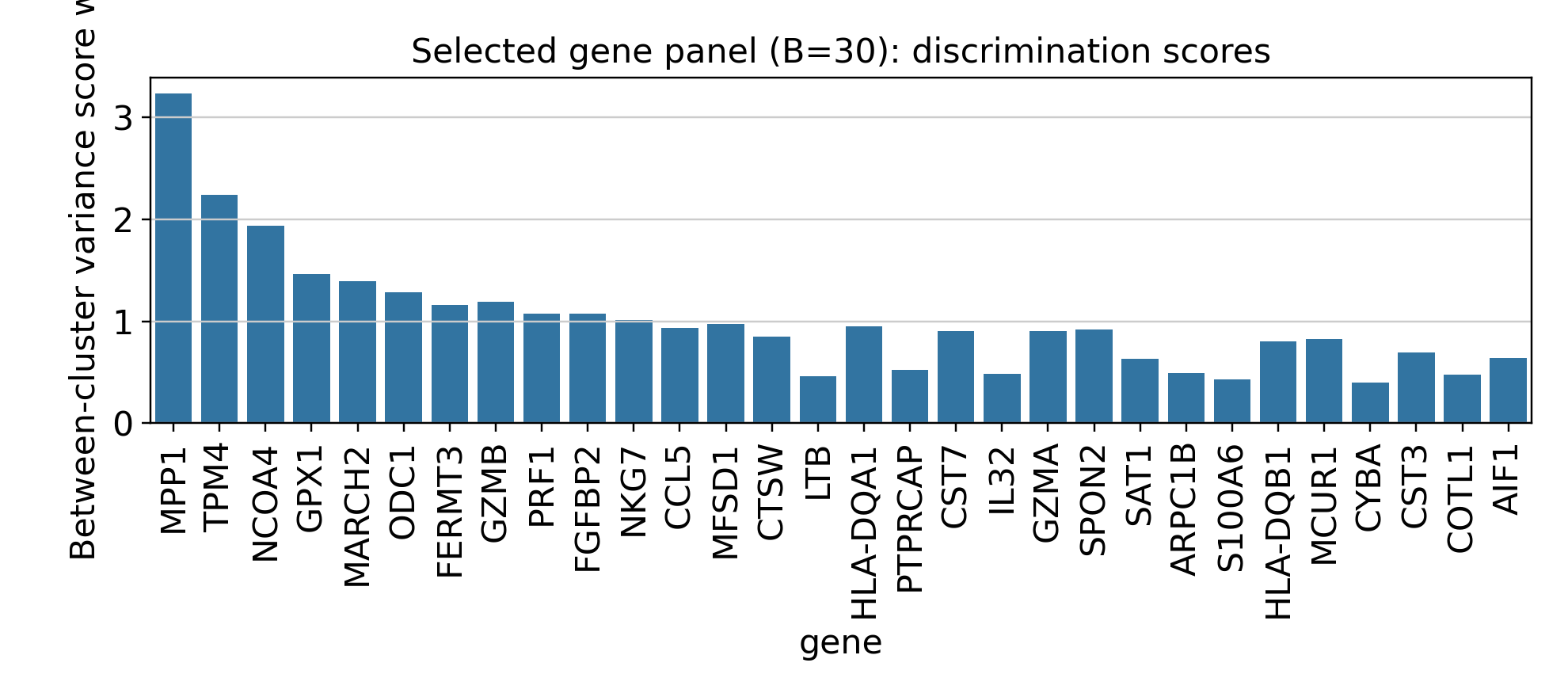}
\caption{Discrimination weights $w_g$ for the selected panel genes (larger is more between-cluster separation).}
\label{fig:pbmc_panel_scores}
\end{figure}

\begin{figure}[h!]
\centering
\includegraphics[width=0.92\linewidth]{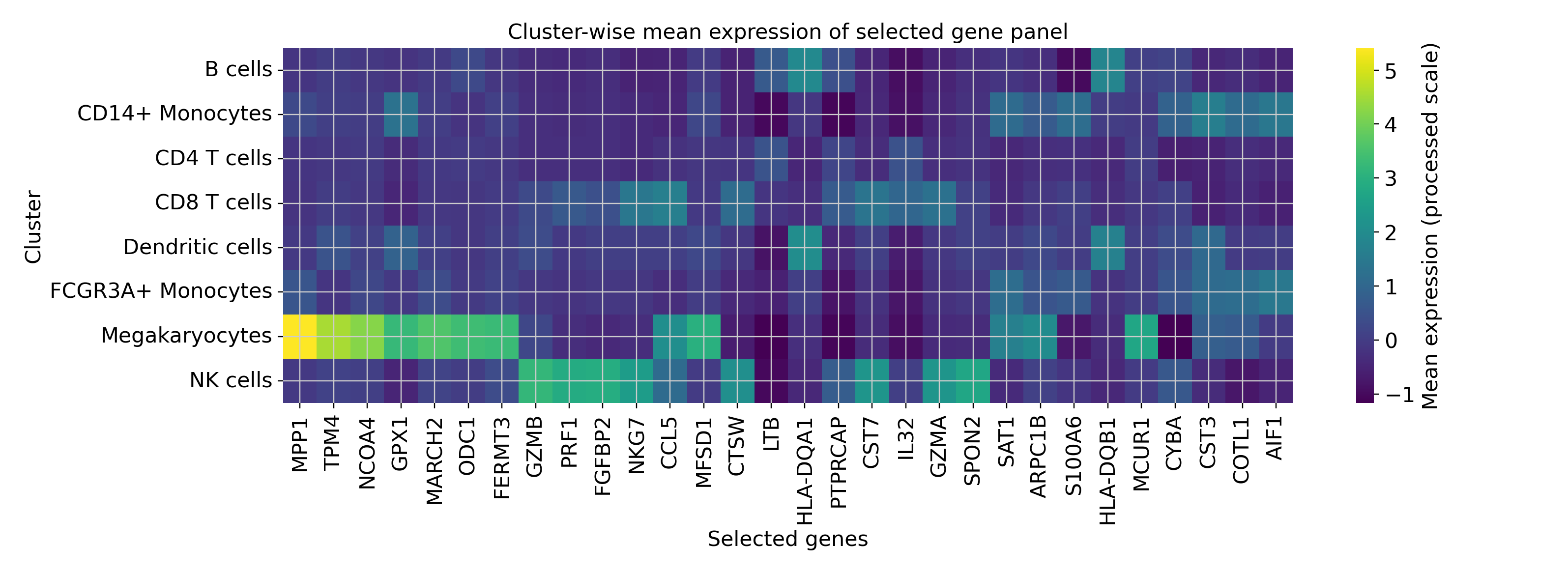}
\caption{Heatmap of cluster-wise mean expression (processed scale) for the selected $B=30$ gene panel.
This provides a compact, interpretable view of how the chosen panel separates clusters.}
\label{fig:pbmc_heatmap}
\end{figure}

\begin{figure}[h!]
\centering
\includegraphics[width=0.80\linewidth]{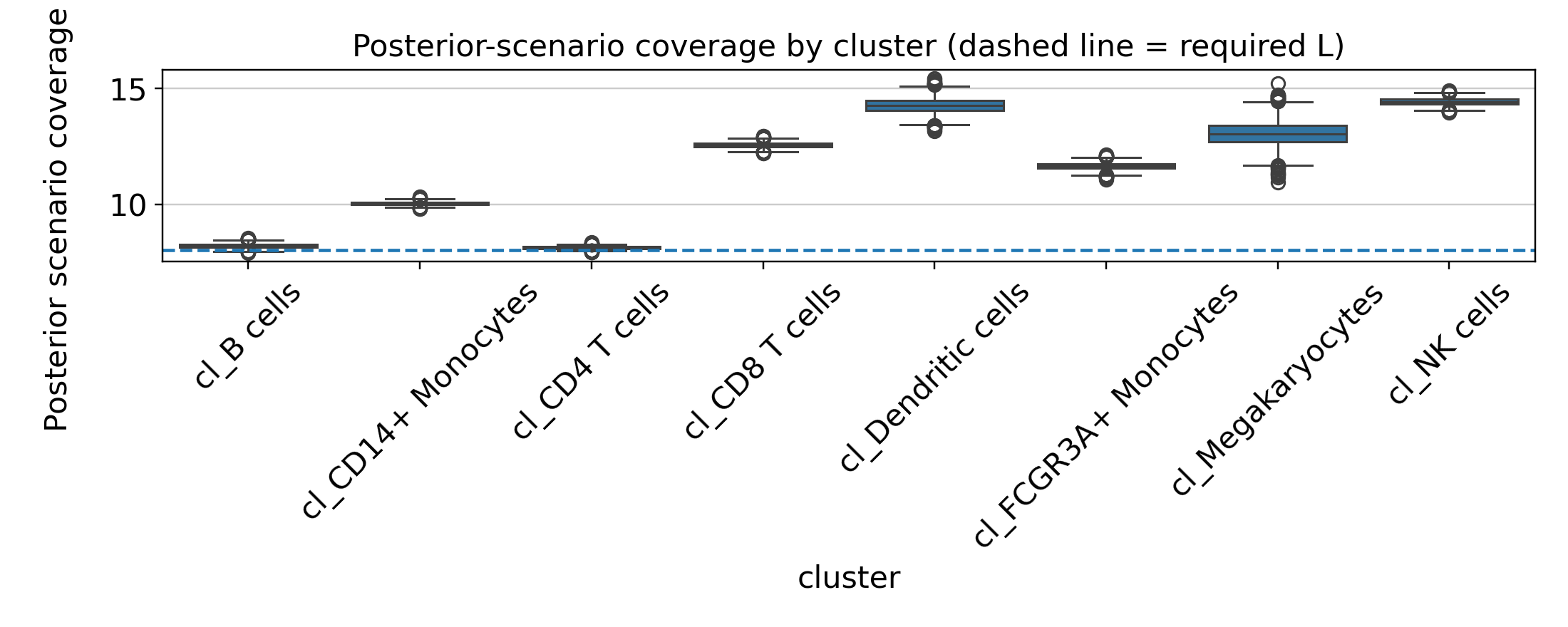}
\caption{Posterior-scenario coverage distributions by cluster: boxplots of $\sum_g q_{jg}\hat x_g$ across $M_{\text{cert}}=4000$ posterior draws.
The dashed horizontal line is the required threshold $L=8$.}
\label{fig:pbmc_coverage_boxplot}
\end{figure}

\begin{figure}[h!]
\centering
\includegraphics[width=0.55\linewidth]{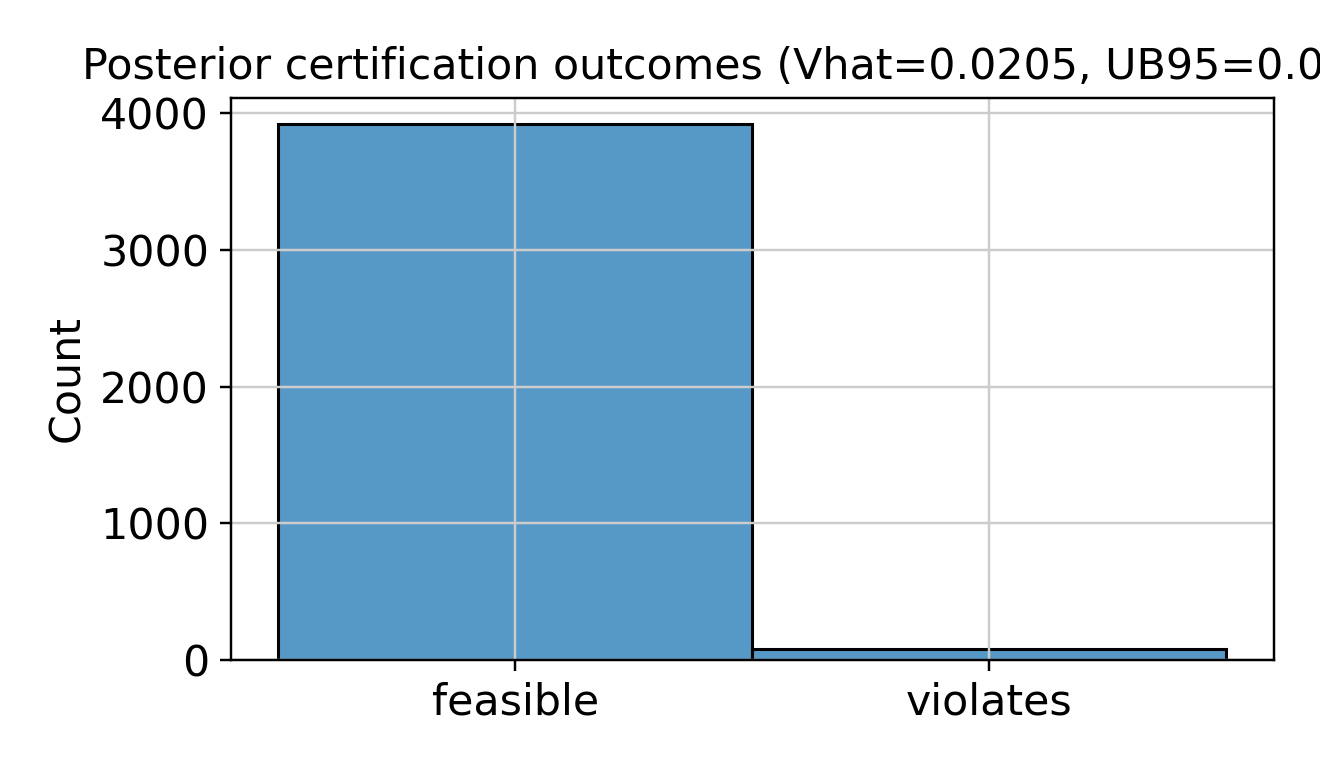}
\caption{Posterior feasibility certification outcomes (feasible vs.\ violating posterior draws),
corresponding to $\widehat V_{\text{post}}=0.0205$ and $\widehat V_{\text{post,UB95}}=0.024582$ (Table~\ref{tab:pbmc_certificate}).}
\label{fig:pbmc_violation_hist}
\end{figure}

\begin{figure}[h!]
\centering
\begin{minipage}{0.32\textwidth}\centering
\includegraphics[width=\linewidth]{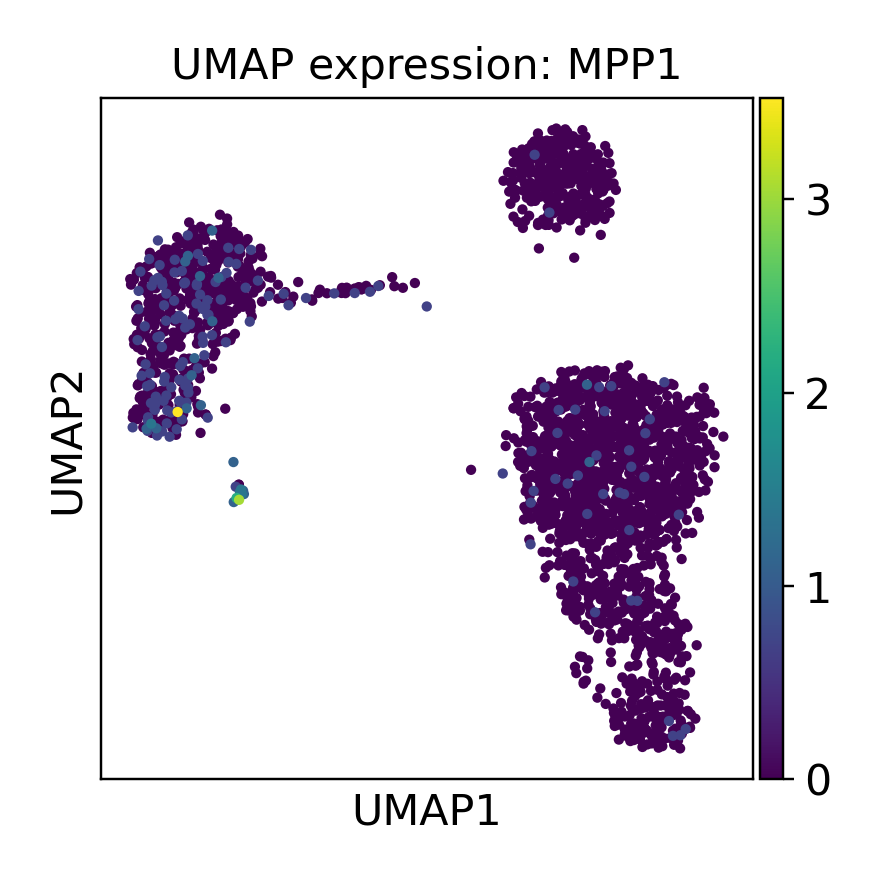}
\caption*{(a) MPP1}
\end{minipage}\hfill
\begin{minipage}{0.32\textwidth}\centering
\includegraphics[width=\linewidth]{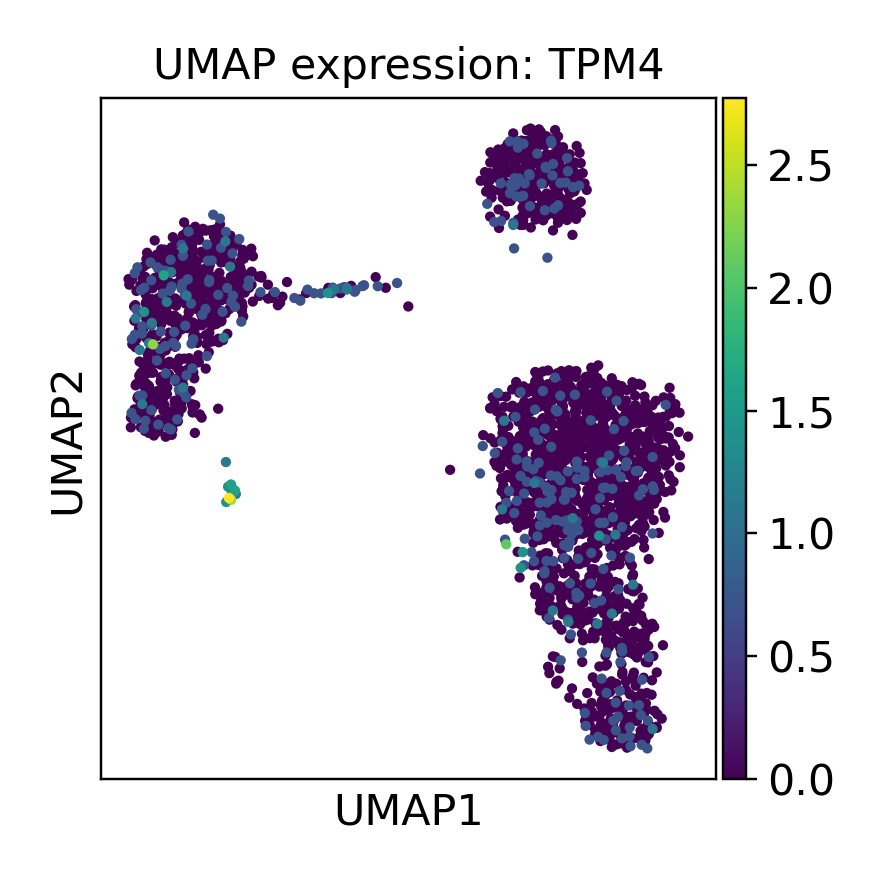}
\caption*{(b) TPM4}
\end{minipage}\hfill
\begin{minipage}{0.32\textwidth}\centering
\includegraphics[width=\linewidth]{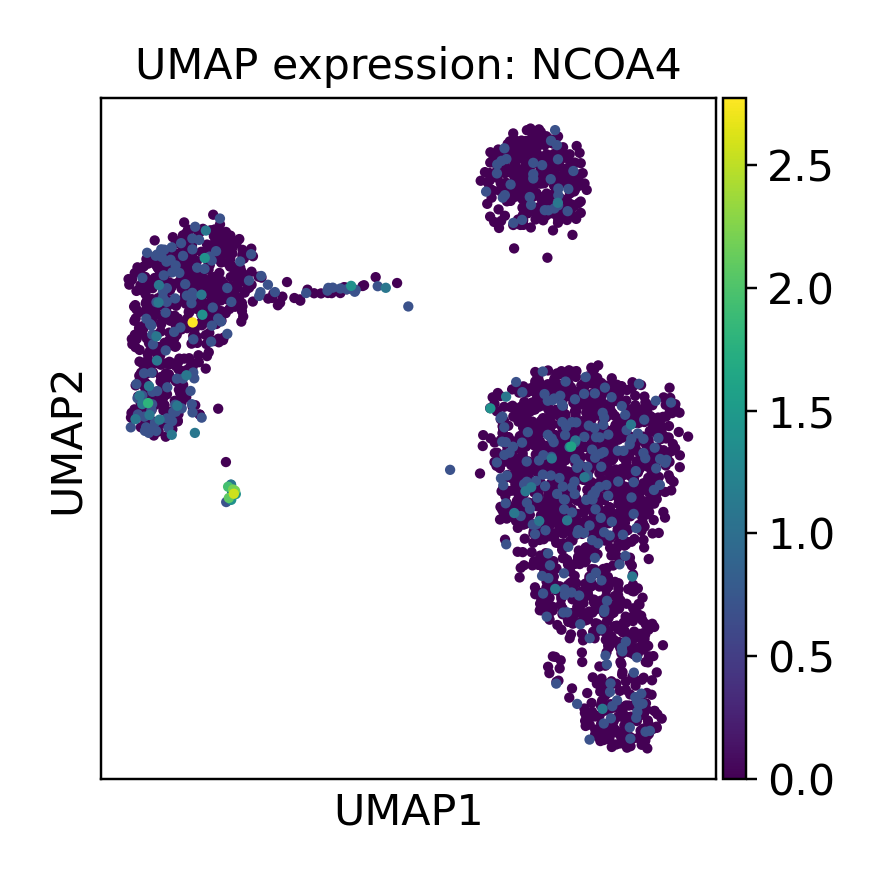}
\caption*{(c) NCOA4}
\end{minipage}

\vspace{0.6em}

\begin{minipage}{0.32\textwidth}\centering
\includegraphics[width=\linewidth]{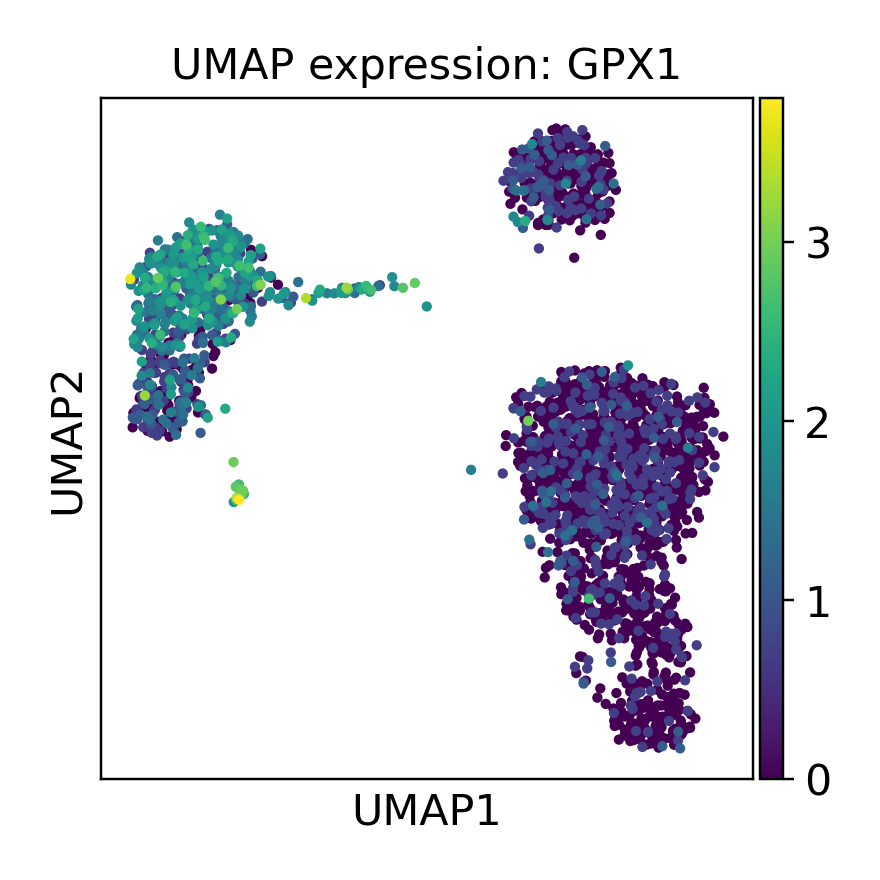}
\caption*{(d) GPX1}
\end{minipage}\hfill
\begin{minipage}{0.32\textwidth}\centering
\includegraphics[width=\linewidth]{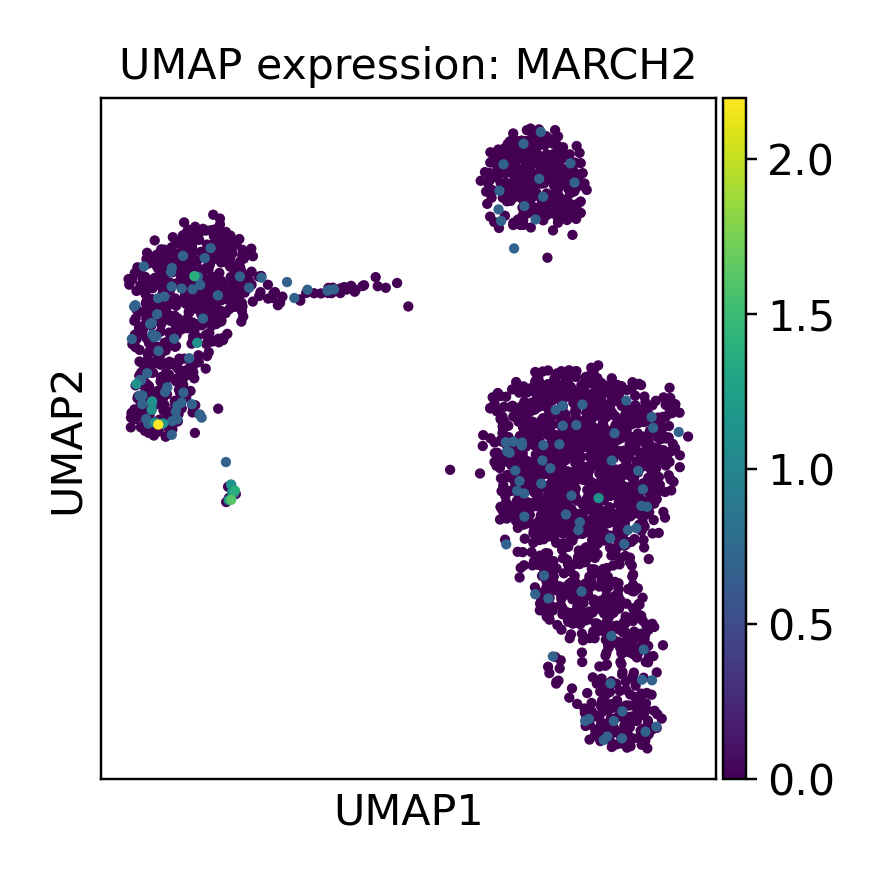}
\caption*{(e) MARCH2}
\end{minipage}\hfill
\begin{minipage}{0.32\textwidth}\centering
\includegraphics[width=\linewidth]{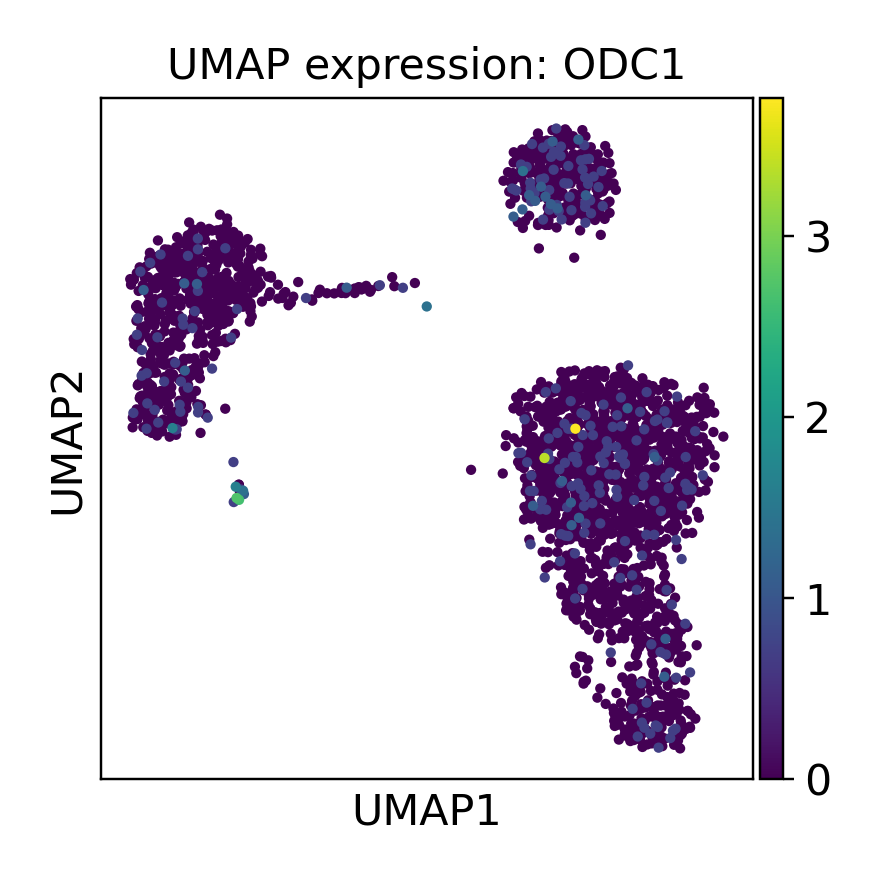}
\caption*{(f) ODC1}
\end{minipage}

\caption{UMAP feature plots for the top six selected panel genes (highest $w_g$).}
\label{fig:pbmc_topgene_umaps}
\end{figure}

\paragraph{Cluster composition.}
Table~\ref{tab:pbmc_cluster_sizes} shows the cluster sizes ($J=8$ clusters) used in our detectability model.

\begin{table}[h!]
\centering
\caption{PBMC3k cluster sizes used for modeling gene detectability.}
\label{tab:pbmc_cluster_sizes}
\small
\begin{tabular}{@{}lc@{}}
\toprule
Cluster & \# Cells \\
\midrule
B cells & 342 \\
CD14+ Monocytes & 480 \\
CD4 T cells & 1144 \\
CD8 T cells & 316 \\
Dendritic cells & 37 \\
FCGR3A+ Monocytes & 150 \\
Megakaryocytes & 15 \\
NK cells & 154 \\
\bottomrule
\end{tabular}
\end{table}

\paragraph{Selected gene panel.}
Table~\ref{tab:pbmc_panel} lists the selected $B=30$ genes, their relaxed LP selection values ($x_g$), and discrimination scores ($w_g$). The bar chart in Figure~\ref{fig:pbmc_panel_scores} visualizes the $w_g$ values for the selected panel.

\begin{table}[h!]
\centering
\caption{Selected gene panel ($B=30$) from the posterior-scenario Bayesian LP.}
\label{tab:pbmc_panel}
\scriptsize
\begin{tabular}{@{}rllr@{}}
\toprule
Rank & Gene & $x_g$ (relaxed) & $w_g$ \\
\midrule
 1 & MPP1     & 1.000000 & 3.230263 \\
 2 & TPM4     & 1.000000 & 2.240738 \\
 3 & NCOA4    & 1.000000 & 1.936093 \\
 4 & GPX1     & 1.000000 & 1.462955 \\
 5 & MARCH2   & 1.000000 & 1.389063 \\
 6 & ODC1     & 1.000000 & 1.285478 \\
 7 & FERMT3   & 1.000000 & 1.159700 \\
 8 & GZMB     & 1.000000 & 1.190489 \\
 9 & PRF1     & 1.000000 & 1.073113 \\
10 & FGFBP2   & 1.000000 & 1.073560 \\
11 & NKG7     & 1.000000 & 1.010056 \\
12 & CCL5     & 1.000000 & 0.931688 \\
13 & MFSD1    & 1.000000 & 0.972855 \\
14 & CTSW     & 1.000000 & 0.845375 \\
15 & LTB      & 1.000000 & 0.461896 \\
16 & HLA-DQA1 & 1.000000 & 0.949899 \\
17 & PTPRCAP  & 1.000000 & 0.526779 \\
18 & CST7     & 1.000000 & 0.906956 \\
19 & IL32     & 1.000000 & 0.485118 \\
20 & GZMA     & 1.000000 & 0.901278 \\
21 & SPON2    & 1.000000 & 0.915694 \\
22 & SAT1     & 1.000000 & 0.634532 \\
23 & ARPC1B   & 1.000000 & 0.488807 \\
24 & S100A6   & 1.000000 & 0.426665 \\
25 & HLA-DQB1 & 1.000000 & 0.803138 \\
26 & MCUR1    & 1.000000 & 0.825031 \\
27 & CYBA     & 1.000000 & 0.401687 \\
28 & CST3     & 1.000000 & 0.695318 \\
29 & COTL1    & 0.957915 & 0.477773 \\
30 & AIF1     & 0.801994 & 0.641515 \\
\bottomrule
\end{tabular}
\end{table}

\paragraph{Posterior-feasibility certificate.}
Table~\ref{tab:pbmc_certificate} reports the estimated posterior violation probability and a conservative 95\% upper bound.
In this run, the posterior violation estimate is $\widehat V_{\text{post}}=0.0205$ and the conservative upper bound is $\widehat V_{\text{post,UB95}}=0.024582$, demonstrating \emph{strong posterior-feasibility} for the chosen threshold $L=8$.

\begin{table}[h!]
\centering
\caption{Global posterior-feasibility certificate for the selected panel ($B=30$, $L=8$).}
\label{tab:pbmc_certificate}
\small
\begin{tabular}{@{}ccccccc@{}}
\toprule
$\alpha$ (intent) & $B$ & $L$ & $S$ (opt) & $M_{\text{cert}}$ &
$\widehat V_{\text{post}}$ & $\widehat V_{\text{post,UB95}}$ \\
\midrule
0.05 & 30 & 8.0 & 300 & 4000 & 0.0205 & 0.024582 \\
\bottomrule
\end{tabular}
\end{table}

\paragraph{Cluster-wise safety margins.}
Table~\ref{tab:pbmc_cluster_detail} shows the posterior distribution of each cluster’s coverage
$\sum_g q_{jg} \hat x_g$ (mean and selected quantiles) and its cluster-specific violation rate
$\Pr(\sum_g q_{jg}\hat x_g < L \mid D)$ estimated from the $M_{\text{cert}}$ posterior draws. The most binding clusters are \texttt{CD4 T cells} and \texttt{B cells}, whose mean coverages (8.123 and 8.208) lie close to the threshold $L=8$ and whose violation rates are nonzero (0.0120 and 0.00875, respectively), while the remaining clusters have substantial slack.

\begin{table}[h!]
\centering
\caption{Cluster-wise posterior coverage summaries for the selected panel (coverage $=\sum_g q_{jg}\hat x_g$; threshold $L=8$).}
\label{tab:pbmc_cluster_detail}
\small
\begin{tabular}{@{}lccccc@{}}
\toprule
Cluster & Mean & $5\%$ & Median & $95\%$ & Violation rate \\
\midrule
B cells & 8.207970 & 8.053454 & 8.207325 & 8.366556 & 0.00875 \\
CD14+ Monocytes & 10.043267 & 9.925323 & 10.043179 & 10.157165 & 0.00000 \\
CD4 T cells & 8.123282 & 8.034384 & 8.125445 & 8.211415 & 0.01200 \\
CD8 T cells & 12.563760 & 12.387628 & 12.563310 & 12.738323 & 0.00000 \\
Dendritic cells & 14.274472 & 13.761980 & 14.272147 & 14.807459 & 0.00000 \\
FCGR3A+ Monocytes & 11.642840 & 11.402140 & 11.641145 & 11.889237 & 0.00000 \\
Megakaryocytes & 13.059786 & 12.209755 & 13.053074 & 13.880318 & 0.00000 \\
NK cells & 14.441874 & 14.210318 & 14.440666 & 14.669766 & 0.00000 \\
\bottomrule
\end{tabular}
\end{table}

\subsection{Key findings}
\label{rem:pbmc_keyfindings}
This real-data experiment highlights why \emph{posterior-feasibility} Bayesian LP is practically valuable in genomics settings:
\begin{enumerate}
\item \textbf{A transparent optimization target with an explicit uncertainty-aware safety constraint.}
The objective (between-cluster separability) is interpretable and directly linked to the visualization in Figure~\ref{fig:pbmc_heatmap}, while feasibility is expressed as a concrete per-cluster detectability requirement $L=8$ (Figures~\ref{fig:pbmc_coverage_boxplot}--\ref{fig:pbmc_violation_hist}).

\item \textbf{Posterior-feasibility is quantitatively certified.}
The selected panel achieves a small estimated posterior violation probability ($\widehat V_{\text{post}}=0.0205$) with a conservative 95\% upper bound of $0.024582$ (Table~\ref{tab:pbmc_certificate}). This is an actionable, reproducible statement about uncertainty and feasibility, not merely a point estimate.

\item \textbf{The method adapts to heterogeneous clusters.}
Cluster-wise summaries (Table~\ref{tab:pbmc_cluster_detail}) show that \texttt{CD4 T cells} and \texttt{B cells} are the most binding groups (mean coverage close to $L$), while other clusters have substantial slack. This behavior is desirable: the LP automatically allocates panel capacity to maintain feasibility where it is hardest.

\item \textbf{The output is biologically interpretable and auditable.}
The final gene panel (Table~\ref{tab:pbmc_panel}) and gene-level spatial expression patterns (Figure~\ref{fig:pbmc_topgene_umaps}) provide an auditable bridge from optimization to scientific interpretation, which is often essential for genomics decision pipelines.

\end{enumerate}
The PBMC3k case study demonstrates that our posterior-scenario Bayesian LP produces a \emph{compact, interpretable gene panel} together with \emph{explicit posterior-feasibility guarantees and diagnostics}, a combination that is difficult to obtain from purely plug-in or purely heuristic selection approaches.
%

\section{Discussion}
The framework developed in this paper is motivated by a simple yet important observation: in many contemporary linear programming applications, uncertainty is not merely an exogenous perturbation around a fixed model, but rather the output of statistical learning. Once this is recognized, it becomes natural to ask that feasibility be assessed with respect to the learned posterior distribution rather than a fixed nominal estimate alone. The notion of posterior feasibility provides precisely such a bridge.

A central conceptual advantage of the proposed approach is that it separates two sources of reasoning that are often conflated in practice. The first concerns \emph{learning from data}: how uncertain are the coefficients after observing data? The second concerns \emph{decision protection}: how conservative should the resulting decision be in the face of that uncertainty? Bayesian updating addresses the former, while the feasible-set constructions developed here address the latter. This separation leads to a more transparent decision pipeline, in which the role of prior information, observed data, posterior uncertainty, and optimization conservatism can each be explicitly examined.

The two computational strategies studied in this paper serve different purposes. Credible-set robustification is attractive when one seeks a deterministic optimization problem with direct worst-case interpretation over a posterior high-mass region. It is conceptually simple and computationally stable, but may become conservative when the credible region is large or high-dimensional. By contrast, the posterior-scenario construction is more flexible and often easier to implement in complex models where posterior sampling is available but analytic credible regions are difficult to characterize. Its principal strength is that it aligns naturally with modern Bayesian computation, though its practical performance depends on the number of sampled scenarios and may become conservative if too many extreme posterior realizations are enforced simultaneously.

The simulation study clearly highlights these trade-offs. The posterior-scenario approach provided the strongest empirical safety performance, whereas credible-set robustification offered a more balanced safety--profit compromise. At the same time, the experiments show that no single method dominates across all metrics. In particular, when the data-generating mechanism is closely aligned with a classical predictive model, a carefully specified frequentist predictive-quantile method can remain highly competitive. This is an important point: the contribution of the present work is not that Bayesian methods must always dominate numerically, but that they provide a coherent framework in which uncertainty learned from data is propagated into optimization and certified in an interpretable way.

The real-data study further illustrates the framework's value beyond synthetic benchmarks. In the PBMC3k single-cell RNA-seq application, the method produced a compact, interpretable gene panel, along with explicit posterior-feasibility diagnostics. This is especially important in scientific applications, where decision quality must often be communicated not only through an objective value but also through transparent uncertainty summaries and cluster-specific safety margins. The resulting analysis demonstrates that Bayesian linear programming can support scientifically meaningful decisions while retaining formal uncertainty-aware structure.

Several limitations should also be acknowledged. First, the posterior guarantees in this paper are conditional on the assumed statistical model; as with all Bayesian procedures, model misspecification can affect the quality of resulting decisions. Second, the current analysis focuses primarily on single-stage linear programs and does not fully address recourse, adaptivity, or sequential decision-making. Third, while the scenario-based guarantees are theoretically appealing, their practical calibration depends on sample size choices and the degree of conservatism one wishes to enforce in posterior protection.

Despite these limitations, the methodology provides a useful conceptual and computational foundation for decision-making under learned uncertainty. More broadly, it suggests that optimization under uncertainty can benefit substantially from closer integration with modern statistical inference, especially in applications where uncertainty must be learned rather than assumed.

\subsection{Extensions}
\paragraph{Connection to chance constraints and convex conservative approximations.}
Posterior feasibility is a chance constraint with the posterior as the governing distribution. Computationally, we can (i) replace it by conservative deterministic constraints (credible-set robustification, akin to convex conservative approximations), or (ii) enforce sampled constraints and rely on scenario theory \citep{nemirovskiShapiro2006,calafioreCampi2006,campiGaratti2008}.

\paragraph{Connection to robust optimization.}
Credible-set robustification produces a robust counterpart where the uncertainty set is \emph{data-dependent} (a credible region) rather than exogenously specified, thereby complementing classical robust optimization \citep{bentalElGhaouiNemirovski2009,bertsimasSim2004,bertsimasBrownCaramanis2011}.

\paragraph{Objective uncertainty and risk aversion.}
If $c(\theta)$ is uncertain, one may replace $\E[c(\theta)^\top x\mid D]$ by a risk-sensitive functional such as a posterior quantile or Conditional Value-at-Risk (CVaR) of the objective. This interacts naturally with posterior feasibility but is not required for the feasibility guarantees developed here.

\paragraph{Recourse and multistage decisions.}
The present paper treats a single-stage decision $x$. Extensions to two-stage LPs with recourse can be developed by placing a Bayesian model on uncertain right-hand sides and modeling recourse actions as functions of $\theta$; the scenario approach can still apply to convex recourse formulations, though careful treatment of policy classes is required.

\section{Conclusion}
This paper introduced a Bayesian framework for linear programming in which uncertain optimization inputs are learned from data and propagated into decision-making through posterior feasibility guarantees. The proposed methodology combines Bayesian updating, tractable feasibility-enforcement mechanisms, and post-solution certification in a unified pipeline. Two complementary constructions were developed: credible-region robustification, which translates posterior uncertainty into deterministic protection, and posterior-scenario optimization, which leverages posterior samples to obtain practically implementable decision rules with interpretable finite-sample support. Across simulation experiments, the proposed methods substantially improved safety relative to naive plug-in optimization, while the real-data study showed that the framework can generate interpretable, uncertainty-aware decisions in a genuine scientific application.

The work opens several directions for future research. One natural extension is to multistage and recourse-based linear programs, where posterior uncertainty must be combined with adaptive decision rules. Another important direction is the development of less conservative scenario constructions and sharper calibration strategies for posterior-feasibility control. It would also be valuable to study theoretical robustness under model misspecification, as well as extensions to mixed-integer and nonlinear optimization problems. More broadly, the present paper suggests that a fruitful research agenda lies in embedding Bayesian learning directly into optimization theory, so that decisions are not merely optimal for an estimated world, but are accompanied by explicit, data-conditioned guarantees about their reliability.

\section*{Declarations}

\paragraph{Data availability.}
The simulated datasets used in this study were generated by reproducible code provided by the author. The real-data demonstration used the PBMC3k single-cell RNA-seq dataset distributed through the \texttt{scanpy} Python ecosystem.

\paragraph{Code availability.}
All code used for simulation, real-data analysis, figure generation, and reproducible computation is available at:\\
\url{https://github.com/debashisdotchatterjee/Bayesian-LPP-with-Posterior-Feasibility-Guarantees}

\paragraph{Ethics approval.}
Not applicable.

\paragraph{Funding.}
No specific external funding was received for this work.

\paragraph{Competing interests.}
The author declares that there are no competing interests.

\appendix
\section{Technical corrections and proofs}
\label{app:proofs}

\subsection{A necessary correction to the Gaussian credible-set formulation}
\label{app:correction_gaussian}

The Gaussian credible-set discussion in Section~\ref{sec:methodology} requires one important clarification.
Proposition~\ref{prop:credImplies} is a \emph{joint} statement: it applies when there exists a measurable set
$\mathcal{C}_{1-\alpha}(D)\subseteq \Theta$ such that
\[
\Pp(\theta\in\mathcal{C}_{1-\alpha}(D)\mid D)\ge 1-\alpha
\]
and all constraints hold for all $\theta\in\mathcal{C}_{1-\alpha}(D)$.

If, instead, one defines separate row-wise credible regions
\[
\mathcal{C}_{1-\alpha,i}(D),\qquad i=1,\dots,m,
\]
each having posterior probability at least $1-\alpha$, then this \emph{does not by itself imply}
that the joint event
\[
\bigcap_{i=1}^m \{\theta_i\in \mathcal{C}_{1-\alpha,i}(D)\}
\]
has posterior probability at least $1-\alpha$.
Hence the text in the main methodology should be interpreted in one of the following two correct ways.

\paragraph{Option A: joint credible region.}
Let the full uncertain coefficient vector be
\[
u(\theta):=\big(u_1(\theta)^\top,\dots,u_m(\theta)^\top\big)^\top \in \mathbb{R}^{m(n+1)},
\qquad
u_i(\theta):=\begin{bmatrix}a_i(\theta)\\ b_i(\theta)\end{bmatrix}.
\]
If
\[
u(\theta)\mid D \approx \mathcal{N}(\bar u,\Sigma),
\]
then a joint ellipsoidal credible region is
\[
\mathcal{C}^{\mathrm{joint}}_{1-\alpha}(D)
=
\left\{
u:\ (u-\bar u)^\top \Sigma^{-1}(u-\bar u)\le \chi^2_{m(n+1)}(1-\alpha)
\right\},
\]
which satisfies
\[
\Pp\big(u(\theta)\in \mathcal{C}^{\mathrm{joint}}_{1-\alpha}(D)\mid D\big)\approx 1-\alpha
\]
under the Gaussian approximation. If one enforces all constraints for all
$u\in \mathcal{C}^{\mathrm{joint}}_{1-\alpha}(D)$, then Proposition~\ref{prop:credImplies} applies directly.

\paragraph{Option B: Bonferroni-corrected row-wise credible regions.}
Alternatively, for each row define
\[
\mathcal{C}_{1-\alpha/m,i}(D)
=
\left\{
u:\ (u-\bar u_i)^\top \Sigma_i^{-1}(u-\bar u)\le \chi^2_{n+1}(1-\alpha/m)
\right\}.
\]
If the robustified row-wise constraints hold for all $u_i\in \mathcal{C}_{1-\alpha/m,i}(D)$, then by the union bound
\[
\Pp\left(\bigcap_{i=1}^m \{u_i(\theta)\in \mathcal{C}_{1-\alpha/m,i}(D)\}\mid D\right)
\ge 1-\sum_{i=1}^m \frac{\alpha}{m}
=1-\alpha.
\]
Therefore Proposition~\ref{prop:credImplies} again yields posterior feasibility.

\paragraph{Corrected row-wise SOC form.}
Under the Bonferroni-corrected formulation, the row-wise SOC constraint becomes
\begin{equation}
\label{eq:app_soc_corrected}
\bar u_i^\top z(x)
+
\kappa_{\alpha/m}\,\|\Sigma_i^{1/2}z(x)\|_2
\le 0,
\qquad
\kappa_{\alpha/m}:=\sqrt{\chi^2_{n+1}(1-\alpha/m)},
\end{equation}
where
\[
z(x):=\begin{bmatrix}x\\ -1\end{bmatrix}.
\]
This corrected version is the one that should be used if the main text retains separate row-wise ellipsoids.

\subsection{Proof of Proposition~\ref{prop:credImplies}}
\label{app:proof_prop_cred}

\begin{proof}[Proof of Proposition~\ref{prop:credImplies}]
Recall that
\[
V_D(x)
=
\Pp_{\theta\mid D}\Big(\exists i\in\{1,\dots,m\}: g_i(x,\theta)>0\Big).
\]
Assume that $x$ satisfies
\[
g_i(x,\theta)\le 0,\qquad \forall \theta\in \mathcal{C}_{1-\alpha}(D),\ \forall i=1,\dots,m,
\]
and that
\[
\Pp(\theta\in \mathcal{C}_{1-\alpha}(D)\mid D)\ge 1-\alpha.
\]
Then, whenever $\theta\in \mathcal{C}_{1-\alpha}(D)$, no violation occurs. Therefore
\[
\Big\{\exists i:\ g_i(x,\theta)>0\Big\}
\subseteq
\{\theta\notin \mathcal{C}_{1-\alpha}(D)\}.
\]
Taking posterior probabilities conditional on $D$ gives
\[
V_D(x)
=
\Pp_{\theta\mid D}\Big(\exists i:\ g_i(x,\theta)>0\Big)
\le
\Pp_{\theta\mid D}\big(\theta\notin \mathcal{C}_{1-\alpha}(D)\big)
\le \alpha.
\]
Hence $x$ is $(1-\alpha)$ posterior-feasible.
\end{proof}

\subsection{Derivation of the SOC robust counterpart}
\label{app:proof_soc}

We now derive the second-order cone representation used in the Gaussian credible-set discussion.

\begin{proposition}[SOC reformulation of an ellipsoidal robust linear constraint]
\label{prop:soc}
Let $u\in\mathbb{R}^p$ be uncertain and suppose the robust constraint
\[
u^\top z \le 0
\qquad \forall u\in \mathcal{E},
\]
is imposed over the ellipsoid
\[
\mathcal{E}
=
\left\{
u:\ (u-\bar u)^\top \Sigma^{-1}(u-\bar u)\le \rho^2
\right\},
\]
where $\Sigma\succ 0$ and $\rho>0$. Then this robust constraint is equivalent to
\[
\bar u^\top z + \rho\,\|\Sigma^{1/2}z\|_2 \le 0.
\]
\end{proposition}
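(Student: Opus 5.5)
The plan is to reparametrize the ellipsoid as an affine image of a Euclidean ball. Since $\Sigma\succ 0$, the symmetric square root $\Sigma^{1/2}$ exists and is invertible. Every $u\in\mathcal{E}$ can then be written uniquely as $u=\bar u+\rho\,\Sigma^{1/2}v$ with $\|v\|_2\le 1$. Indeed, setting $v:=\rho^{-1}\Sigma^{-1/2}(u-\bar u)$ gives
\[
\|v\|_2^2=\rho^{-2}(u-\bar u)^\top\Sigma^{-1}(u-\bar u),
\]
so the defining inequality of $\mathcal{E}$ is exactly $\|v\|_2\le 1$. Conversely, every such $v$ produces a point of $\mathcal{E}$. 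Hence $\mathcal{E}=\{\bar u+\rho\Sigma^{1/2}v:\|v\|_2\le 1\}$, and this bijection is the one structural fact the argument needs.

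With this parametrization, the robust constraint ``$u^\top z\le 0$ for all $u\in\mathcal{E}$'' is equivalent to
\[
\max_{u\in\mathcal{E}} u^\top z=\bar u^\top z+\rho\max_{\|v\|_2\le 1} v^\top \Sigma^{1/2}z\le 0.
\]
Here I use the symmetry of $\Sigma^{1/2}$ to write $(\Sigma^{1/2}v)^\top z=v^\top\Sigma^{1/2}z$. The supremum is attained because $\mathcal{E}$ is compact and the map $u\mapsto u^\top z$ is linear.

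The remaining step is to evaluate $\max_{\|v\|_2\le 1} v^\top w$ with $w=\Sigma^{1/2}z$. By Cauchy--Schwarz, $v^\top w\le\|v\|_2\|w\|_2\le\|w\|_2$. If $w\neq 0$, equality is attained at $v=w/\|w\|_2$. If $w=0$, the value is $0=\|w\|_2$. So the maximum equals $\|\Sigma^{1/2}z\|_2$, and the robust constraint becomes $\bar u^\top z+\rho\|\Sigma^{1/2}z\|_2\le 0$. Both directions of the equivalence follow at once, because ``$f(u)\le 0$ for all $u\in\mathcal{E}$'' holds if and only if $\max_{\mathcal{E}}f\le 0$.

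There is no real obstacle here. The only points needing care are the existence and symmetry of $\Sigma^{1/2}$, which requires $\Sigma\succ 0$, and the $w=0$ edge case in the Cauchy--Schwarz attainment. I will state both explicitly. Applying the result with $\rho=\kappa_{\alpha/m}$, $\bar u=\bar u_i$, $\Sigma=\Sigma_i$ and $z=z(x)$ then recovers \eqref{eq:soc} and \eqref{eq:app_soc_corrected}.
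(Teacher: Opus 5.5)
Your proof is correct and follows essentially the same route as the paper: you write the ellipsoid as an affine image of a Euclidean ball under $\Sigma^{1/2}$, then evaluate the worst case by Cauchy--Schwarz with an explicit maximizer. The only differences are cosmetic: you normalize the ball to radius one, and you explicitly treat the case $\Sigma^{1/2}z=0$, which the paper leaves implicit.
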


\begin{proof}
Write $u=\bar u+\Sigma^{1/2}v$, where the ellipsoidal condition becomes
\[
v^\top v \le \rho^2.
\]
Then
\[
u^\top z
=
\bar u^\top z + v^\top \Sigma^{1/2} z.
\]
Hence
\[
\sup_{u\in\mathcal{E}} u^\top z
=
\bar u^\top z + \sup_{\|v\|_2\le \rho} v^\top \Sigma^{1/2}z.
\]
By Cauchy--Schwarz,
\[
v^\top \Sigma^{1/2}z \le \|v\|_2\,\|\Sigma^{1/2}z\|_2 \le \rho\,\|\Sigma^{1/2}z\|_2,
\]
and equality is attained by taking
\[
v = \rho\,\frac{\Sigma^{1/2}z}{\|\Sigma^{1/2}z\|_2}
\]
when $\Sigma^{1/2}z\neq 0$. Therefore
\[
\sup_{u\in\mathcal{E}} u^\top z
=
\bar u^\top z + \rho\,\|\Sigma^{1/2}z\|_2.
\]
Thus the robust inequality
\[
u^\top z\le 0 \quad \forall u\in\mathcal{E}
\]
holds if and only if
\[
\bar u^\top z + \rho\,\|\Sigma^{1/2}z\|_2 \le 0.
\]
\end{proof}

\begin{remark}
Proposition~\ref{prop:soc} is standard in robust optimization; see, for example, \citet[Chapter 2]{bentalElGhaouiNemirovski2009} and \citet{bertsimasPachamanovaSim2004}.
\end{remark}

\subsection{Scenario theorem: correct interpretation and proof route}
\label{app:proof_scenario_theorem}

Theorem~\ref{thm:scenario} in the main text is not proved from first principles here, because it is a direct specialization of the standard scenario theorem for convex programs established in \citet{calafioreCampi2006} and \citet{campiGaratti2008}. A fully self-contained proof would require reproducing the support-constraint argument and the associated combinatorial counting argument from those papers.

To make the dependence on the literature fully precise, the theorem should be interpreted as follows.

\begin{theorem}[Posterior-scenario bound; precise formulation]
\label{thm:scenario_precise}
Assume:
\begin{enumerate}
\item $\mathcal{X}\subseteq\mathbb{R}^n$ is convex and compact;
\item for each $\theta$, the map $x\mapsto g_i(x,\theta)$ is convex on $\mathcal{X}$ for all $i=1,\dots,m$;
\item for every i.i.d.\ sample $\theta^{(1)},\dots,\theta^{(N)}\sim p(\theta\mid D)$, the sampled convex program admits a feasible solution and a unique optimizer, or more generally a measurable tie-breaking selection;
\item the scenario theorem of \citet{campiGaratti2008} applies with support rank (or Helly dimension) at most $d$.
\end{enumerate}
Then, for the optimizer $x_N$ of the posterior-scenario program,
\[
\Pp_{\theta^{(1:N)}\mid D}\Big(V_D(x_N)>\varepsilon\Big)
\le
\sum_{j=0}^{d-1}\binom{N}{j}\varepsilon^j(1-\varepsilon)^{N-j},
\qquad \varepsilon\in(0,1),
\]
where the probability is with respect to the i.i.d.\ posterior sample used to build the scenario program.
\end{theorem}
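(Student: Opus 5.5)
The plan is to reduce Theorem~\ref{thm:scenario_precise} to the Campi--Garatti scenario theorem, with the data $D$ held fixed throughout. Conditionally on $D$, the posterior $p(\theta\mid D)$ is a fixed probability measure $P_D$ on $\Theta$. The $N$ scenarios are i.i.d.\ draws from $P_D$. The scenario program \eqref{eq:scenarioLP} has a deterministic convex objective $\E[c(\theta)^\top x\mid D]$, linear in $x$, over the convex compact set $\mathcal{X}$. Its constraints are the single convex constraint
\[
G(x,\theta):=\max_{1\le i\le m} g_i(x,\theta)\le 0,
\]
which is convex in $x$ because it is a pointwise maximum of convex functions. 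In this notation $V_D(x)=P_D\{\theta: G(x,\theta)>0\}$, which is exactly the violation probability in the sense of \citet{campiGaratti2008} with uncertainty law $P_D$. The first step is therefore to verify that every hypothesis of their theorem holds for this fixed measure:
\begin{itemize}
\item convexity of the feasible set and objective;
\item existence and uniqueness (or measurable tie-breaking) of $x_N$, which is assumption 3;
\item the support-rank/Helly bound $d$, which is assumption 4.
\end{itemize}
Their conclusion $P_D^N(V_D(x_N)>\varepsilon)\le\sum_{j<d}\binom{N}{j}\varepsilon^j(1-\varepsilon)^{N-j}$ is then precisely the claim, since $P_D^N$ is the law of $\theta^{(1:N)}\mid D$.

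For completeness I would sketch why the bound holds, following the support-constraint route. First, Calafiore--Campi show that any sampled convex program has at most $d$ support constraints: scenarios whose removal changes the optimizer. Second, consider the $N$ draws and a subset $I$ of size $d$. Let $x_I$ be the solution built from $I$ alone. The bad event $\{V_D(x_N)>\varepsilon\}$ is contained in the union over such $I$ of the events that $x_I$ violates with probability more than $\varepsilon$ while the remaining $N-d$ draws all satisfy $x_I$. Third, condition on the draws in $I$. Independence then gives each such event probability at most $(1-\varepsilon)^{N-d}$.

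This union bound alone only yields the cruder estimate $\binom{N}{d}(1-\varepsilon)^{N-d}$. The sharp binomial tail needs the Campi--Garatti fully-supported argument: one shows the distribution of $V_D(x_N)$ is dominated by a $\mathrm{Beta}(d,N-d+1)$ law, with equality in the fully supported case. The route is a symmetry/exchangeability count over which $d$ of the $N$ draws are the support set, giving an integral identity for the distribution function of $V_D(x_N)$. I would cite this step rather than reproduce it. Non-fully-supported problems are handled by their perturbation/heating argument, which makes the tail only smaller.

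The main obstacle is bookkeeping rather than new mathematics, in two places. First, the measurability of $x_N$ as a function of $\theta^{(1:N)}$, together with the convention that ties are broken consistently, must hold so that $V_D(x_N)$ is a random variable and the exchangeability argument goes through. Second, $d$ must be correctly identified. For the LP with $m$ rows and decision dimension $n$, the Helly dimension is at most $n$, because the aggregated constraint $G$ is one convex constraint per scenario in $\R^n$. This holds even though each scenario contributes $m$ linear inequalities, and it is why the bound uses $d\le n$ and not $mn$. I would finish by remarking that the guarantee is conditional on $D$. It is therefore a statement about posterior violation, and integrating over $D$ yields the same bound for the joint law when $d$ does not depend on $D$.
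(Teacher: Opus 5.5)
Your proposal is correct and follows the same route as the paper. In both, conditioning on $D$ turns $p(\theta\mid D)$ into a fixed sampling law, so the posterior-scenario program becomes a standard convex scenario program; your aggregation $G=\max_i g_i$ just makes the one-constraint-per-scenario structure explicit. The Campi--Garatti theorem is then invoked under assumptions 1--4, and your extra sketch of the support-constraint and Beta-domination argument is accurate but goes beyond the paper, which simply cites that result.
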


\begin{proof}[Proof of Theorem~\ref{thm:scenario_precise}]
Conditional on the observed data $D$, the posterior distribution $p(\theta\mid D)$ is an ordinary probability distribution on the uncertainty space $\Theta$. Therefore the posterior-scenario problem is exactly a standard scenario convex program in which the uncertainty samples are drawn i.i.d.\ from the distribution $p(\theta\mid D)$.

Under assumptions (i)--(iv), the abstract scenario theorem of \citet{campiGaratti2008} applies directly. Their result states that if a convex scenario program is formed from $N$ i.i.d.\ samples from a probability law on the uncertainty space, then the probability that the resulting optimizer has violation probability exceeding $\varepsilon$ is bounded by
\[
\sum_{j=0}^{d-1}\binom{N}{j}\varepsilon^j(1-\varepsilon)^{N-j},
\]
where $d$ denotes the support rank / Helly dimension appearing in the theorem.
Applying that result with the sampling law equal to the posterior distribution $p(\theta\mid D)$ yields the stated inequality.
\end{proof}

\begin{remark}
The key point is that, once we condition on the observed data $D$, the posterior becomes the sampling law for the scenario program. Thus no new combinatorial proof is needed; rather, the posterior-scenario theorem is an immediate conditional-on-data specialization of the standard scenario theorem.
\end{remark}

\subsection{Proof of Corollary~\ref{cor:Nchoice}}
\label{app:proof_corollary}

\begin{proof}[Proof of Corollary~\ref{cor:Nchoice}]
From Theorem~\ref{thm:scenario_precise}, for any $\varepsilon\in(0,1)$,
\[
\Pp_{\theta^{(1:N)}\mid D}\Big(V_D(x_N)>\varepsilon\Big)
\le
\sum_{j=0}^{d-1}\binom{N}{j}\varepsilon^j(1-\varepsilon)^{N-j}.
\]
If $N$ is chosen so that
\[
\sum_{j=0}^{d-1}\binom{N}{j}\varepsilon^j(1-\varepsilon)^{N-j}\le \delta,
\]
then
\[
\Pp_{\theta^{(1:N)}\mid D}\Big(V_D(x_N)>\varepsilon\Big)\le \delta.
\]
Equivalently,
\[
\Pp_{\theta^{(1:N)}\mid D}\Big(V_D(x_N)\le \varepsilon\Big)\ge 1-\delta.
\]
This is exactly the claimed result.
\end{proof}

\subsection{Monte Carlo certification: exact binomial model and confidence bounds}
\label{app:proof_mc}

The post-solution certification step admits a direct exact analysis.

\begin{proposition}[Exact binomial law of the violation count]
\label{prop:binomial}
Fix a candidate solution $\hat x$. Let
\[
I_\ell
:=
\mathbf{1}\Big\{\exists i:\ g_i(\hat x,\tilde\theta^{(\ell)})>0\Big\},
\qquad \ell=1,\dots,M,
\]
where $\tilde\theta^{(1)},\dots,\tilde\theta^{(M)}$ are i.i.d.\ samples from $p(\theta\mid D)$. Then
\[
I_1,\dots,I_M \overset{\text{i.i.d.}}{\sim} \mathrm{Bernoulli}(V_D(\hat x)),
\]
and therefore
\[
S_M:=\sum_{\ell=1}^M I_\ell \sim \mathrm{Binomial}(M,V_D(\hat x)).
\]
\end{proposition}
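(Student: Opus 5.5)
The plan is to write each indicator $I_\ell$ as a fixed measurable function applied to the i.i.d.\ draw $\tilde\theta^{(\ell)}$, and then use the standard facts that measurable images of i.i.d.\ variables are i.i.d., and that a sum of i.i.d.\ Bernoulli variables is binomial.

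First, fix $\hat x$ and introduce the violation set
\[
\mathcal{V}(\hat x):=\bigcup_{i=1}^m\{\theta\in\Theta:\ g_i(\hat x,\theta)>0\}.
\]
This set is measurable provided each map $\theta\mapsto g_i(\hat x,\theta)$ is measurable. That is implicit in Definition~\ref{def:postfeas}, since $V_D(\hat x)$ is defined there as the posterior probability of exactly this event. By construction, $I_\ell=\1_{\mathcal{V}(\hat x)}(\tilde\theta^{(\ell)})$, so $I_\ell=h(\tilde\theta^{(\ell)})$ with $h:=\1_{\mathcal{V}(\hat x)}$ a fixed measurable map into $\{0,1\}$.

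Second, compute the marginal law of each indicator:
\[
\Pp(I_\ell=1\mid D)=\Pp_{\theta\mid D}\big(\theta\in\mathcal{V}(\hat x)\big)=V_D(\hat x),
\]
which is exactly \eqref{eq:violation}. Hence each $I_\ell\sim\mathrm{Bernoulli}(V_D(\hat x))$.

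Third, establish independence. For any $e_1,\dots,e_M\in\{0,1\}$, the event $\{I_\ell=e_\ell\ \forall \ell\}$ equals $\bigcap_\ell\{\tilde\theta^{(\ell)}\in h^{-1}(e_\ell)\}$. Independence of the draws then factorizes its probability as $\prod_\ell \Pp(I_\ell=e_\ell\mid D)$, so the $I_\ell$ are i.i.d. The binomial law of $S_M$ then follows by the standard counting argument:
\[
\Pp(S_M=s\mid D)=\binom{M}{s}V_D(\hat x)^s(1-V_D(\hat x))^{M-s}.
\]

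The only genuine subtlety is that $\hat x$ must be deterministic, or at least independent of the certification sample, when we condition on $D$. If $\hat x$ came from the scenario program \eqref{eq:scenarioLP}, I would condition on the optimization samples $\theta^{(1:N)}$ as well. Since the certification draws are independent of those samples, they remain i.i.d.\ from $p(\theta\mid D)$ under this conditioning, and the argument above applies with $\hat x$ treated as fixed. I will state this conditioning explicitly; it is also what justifies the Clopper--Pearson bound used in the certification step.
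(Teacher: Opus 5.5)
Your proposal is correct and follows the same route as the paper. Both compute $\Pp(I_\ell=1\mid D)=V_D(\hat x)$ directly from Definition~\ref{def:postfeas}, pass independence from the i.i.d.\ draws to the indicators, and sum to get the binomial law. Your explicit measurability check and your remark that $\hat x$ must be fixed, or independent of the certification sample (e.g.\ by conditioning on $\theta^{(1:N)}$), are details the paper leaves implicit in ``fix a candidate solution''; they are worth keeping.
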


\begin{proof}
By definition,
\[
\Pp(I_\ell=1\mid D)
=
\Pp_{\theta\mid D}\Big(\exists i:\ g_i(\hat x,\theta)>0\Big)
=
V_D(\hat x).
\]
Since the posterior draws $\tilde\theta^{(1)},\dots,\tilde\theta^{(M)}$ are i.i.d., the indicators
$I_1,\dots,I_M$ are i.i.d.\ Bernoulli with success probability $V_D(\hat x)$. Summing them yields the binomial distribution.
\end{proof}

\begin{proposition}[Clopper--Pearson upper confidence bound]
\label{prop:cp}
Let $S_M=s$ be observed. For any confidence level $1-\beta\in(0,1)$, define
\[
u_{1-\beta}(s,M)
:=
F^{-1}_{\mathrm{Beta}(s+1,M-s)}(1-\beta),
\]
where $F^{-1}_{\mathrm{Beta}(a,b)}$ denotes the quantile function of a Beta$(a,b)$ distribution. Then
\[
\Pp\big(V_D(\hat x)\le u_{1-\beta}(S_M,M)\mid D\big)\ge 1-\beta.
\]
Equivalently,
\[
1-u_{1-\beta}(S_M,M)
\]
is a conservative lower confidence bound for the posterior feasibility probability
\[
1-V_D(\hat x).
\]
\end{proposition}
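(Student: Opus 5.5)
The argument is the standard Clopper--Pearson coverage proof, applied with $p := V_D(\hat x)$ as the unknown binomial parameter. Proposition~\ref{prop:binomial} gives $S_M\sim\mathrm{Binomial}(M,p)$ conditionally on $D$ and on $\hat x$. I assume $\hat x$ is fixed, or built from draws independent of the certification sample, so that $p$ is a deterministic number under the conditional law. The coverage event fails exactly when $p>u_{1-\beta}(S_M,M)$, so the goal is to show this has conditional probability at most $\beta$ for every $p\in[0,1]$.

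\emph{Step 1: link the Beta quantile to the binomial cdf.} I would use the Beta--binomial identity: for $0\le s\le M-1$ and $q\in[0,1]$,
\[
\Pp(\mathrm{Binomial}(M,q)\le s)=1-F_{\mathrm{Beta}(s+1,M-s)}(q).
\]
This follows from the order-statistic representation: $\mathrm{Bin}(M,q)\le s$ iff the $(s+1)$-th smallest of $M$ uniforms exceeds $q$, and that order statistic is $\mathrm{Beta}(s+1,M-s)$. Consequently $u_{1-\beta}(s,M)$ is the unique $q$ at which $\Pp_q(S_M\le s)=\beta$. The map $q\mapsto\Pp_q(S_M\le s)$ is continuous and strictly decreasing on $[0,1]$, and $u$ is strictly increasing in $s$. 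For $s=M$ the Beta parameter degenerates, so I set $u_{1-\beta}(M,M):=1$, which gives trivial coverage.

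\emph{Step 2: invert.} Fix the true $p$. If $p\le u_{1-\beta}(0,M)$, then $\{p>u(S_M)\}$ is empty. Otherwise let $s^\ast:=\max\{s: u_{1-\beta}(s,M)<p\}$. By monotonicity of $u$ in $s$,
\[
\{p>u(S_M)\}=\{S_M\le s^\ast\}.
\]
Since $u(s^\ast)<p$ and $\Pp_q(S_M\le s^\ast)$ decreases in $q$,
\[
\Pp_p(S_M\le s^\ast)\le \Pp_{u(s^\ast)}(S_M\le s^\ast)=\beta.
\]
Hence $\Pp(V_D(\hat x)\le u_{1-\beta}(S_M,M)\mid D)\ge1-\beta$. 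The feasibility lower bound follows by taking complements: $1-p\ge 1-u$ on the same event.

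The main obstacle is Step 1: the Beta--binomial identity together with strict monotonicity in $q$. This is where I expect the real work, and I would prove it by differentiating $\sum_{j\le s}\binom{M}{j}q^j(1-q)^{M-j}$ in $q$. The sum telescopes to
\[
-M\binom{M-1}{s}q^s(1-q)^{M-1-s},
\]
which is exactly $-1$ times the $\mathrm{Beta}(s+1,M-s)$ density. The edge cases $s=M$ and the endpoints $q\in\{0,1\}$ then need a brief separate check. Independence of the certification draws from those used to compute $\hat x$ must also be stated explicitly.
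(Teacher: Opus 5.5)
Your proposal is correct and follows the same route as the paper: it uses the binomial law of $S_M$ from Proposition~\ref{prop:binomial}, identifies $u_{1-\beta}(s,M)$ as the root of $\Pp_q(\mathrm{Binomial}(M,q)\le s)=\beta$, and inverts the one-sided binomial test. The paper only cites ``standard exact binomial confidence interval theory'' for the Beta--binomial identity and the coverage step, whereas you prove both explicitly. You also make explicit two points the paper leaves implicit: the convention $u_{1-\beta}(M,M):=1$, needed because the paper's $\mathrm{Beta}(M+1,0)$ is undefined, and the independence of $\hat x$ from the certification draws.
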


\begin{proof}
By Proposition~\ref{prop:binomial}, conditional on $D$, the count $S_M$ has a Binomial$(M,V_D(\hat x))$ distribution. The Clopper--Pearson interval is the exact equal-tailed confidence interval obtained by inversion of the binomial test. In particular, the one-sided upper endpoint $u_{1-\beta}(s,M)$ is defined as the unique value such that
\[
\Pp_{p=u_{1-\beta}(s,M)}\big(\mathrm{Binomial}(M,p)\le s\big)=\beta,
\]
which is equivalent to the Beta-quantile expression above. Standard exact binomial confidence interval theory therefore gives
\[
\Pp\big(V_D(\hat x)\le u_{1-\beta}(S_M,M)\mid D\big)\ge 1-\beta.
\]
Subtracting from one yields the corresponding lower confidence bound for posterior feasibility.
\end{proof}

\begin{remark}
For exact binomial confidence bounds and their Beta-quantile representation, see standard treatments of the Clopper--Pearson interval. No asymptotic approximation is used here.
\end{remark}

\subsection{Bibliographic note on proof sources}
\label{app:biblio_note}

The direct proofs in this appendix are self-contained for:
\begin{itemize}
\item Proposition~\ref{prop:credImplies},
\item the SOC reformulation in Proposition~\ref{prop:soc},
\item Corollary~\ref{cor:Nchoice},
\item the Monte Carlo certification results in Propositions~\ref{prop:binomial} and \ref{prop:cp}.
\end{itemize}
The scenario-feasibility bound is not rederived from scratch, because its proof is the standard support-constraint counting argument already established in \citet{calafioreCampi2006} and \citet{campiGaratti2008}; the Proposed theorem is a direct specialization obtained by taking the sampling law to be the posterior distribution $p(\theta\mid D)$.

\bibliographystyle{plainnat}
\bibliography{references}

\end{document}